\newtheorem{theorem}{Theorem}
\newtheorem{remark}{Remark}
\newtheorem{proposition}{Proposition}
\newtheorem{lemma}{Lemma}
\newtheorem{corollary}{Corollary}
\newsavebox\myboxA
\newsavebox\myboxB
\newlength\mylenA
\newcommand*\xoverline[2][0.75]{%
    \sbox{\myboxA}{$\m@th#2$}%
    \setbox\myboxB\null
    \ht\myboxB=\ht\myboxA%
    \dp\myboxB=\dp\myboxA%
    \wd\myboxB=#1\wd\myboxA
    \sbox\myboxB{$\m@th\overline{\copy\myboxB}$}
    \setlength\mylenA{\the\wd\myboxA}
    \addtolength\mylenA{-\the\wd\myboxB}%
    \ifdim\wd\myboxB<\wd\myboxA%
       \rlap{\hskip 0.5\mylenA\usebox\myboxB}{\usebox\myboxA}%
    \else
        \hskip -0.5\mylenA\rlap{\usebox\myboxA}{\hskip 0.5\mylenA\usebox\myboxB}%
    \fi}
\newtheorem{definition}{Definition}
\newtheorem{assumption}{Assumption}
\let\NAT@parse\undefined
\title{\LARGE \bf Modeling of Dynamical Systems via Successive Graph Approximations}
\author{Siddharth H. Nair, Monimoy Bujarbaruah, and Francesco Borrelli  
\thanks{The authors are with the MPC Lab, UC Berkeley, USA; E-mails: \tt\scriptsize{\{siddharth\_nair, monimoyb, fborrelli\}@berkeley.edu.}} %
}
\begin{document}

\maketitle
  \thispagestyle{empty}
\pagestyle{empty}

\begin{abstract}
In this work, we propose a non-parametric technique for online modeling of systems with unknown nonlinear Lipschitz dynamics. The key idea is to successively utilize measurements to approximate the \emph{graph} of the state-update function using envelopes described by quadratic constraints. The proposed approach is then demonstrated on two  control applications: $(i)$ computation of tractable bounds for un-modelled dynamics, and $(ii)$ computation of positive invariant sets. We further highlight the efficacy of the proposed approach via a detailed numerical example. 
\end{abstract}


\section{Introduction}\label{sec:intro}
Characterization of system model and associated uncertainties is of paramount importance while dealing with autonomous systems. In recent times, as data-driven decision making and control becomes ubiquitous \cite{recht2018tour, rosolia2018data}, system identification methods are being integrated with control algorithms for control of uncertain dynamical systems. In computer science community, data driven reinforcement learning algorithms \cite{bertsekas1989adaptive, watkins1992q} have been extensively utilized for policy and value function learning of uncertain systems. In control theory, if the actual model of a system is unknown, adaptive control \cite{krstic1995nonlinear, sastry2011adaptive} strategies have been applied for simultaneous system identification and control.  
Techniques for system modelling and identification have been traditionally rooted in statistics and data sciences \cite{friedman2001elements, hastie2017generalized}. Statistical models that describe observed data, can be classified into parametric \cite{hothorn2008simultaneous}, non-parametric and semi-parametric \cite{hardle2012nonparametric} models.
\par
Parametric models assume a model structure a priori, based on the application and domain expertise of the designer.
In almost all of classical adaptive control methods, parametric models are learned from data in terms of point estimates, and asymptotic convergence of such estimates are proven under persistence of excitation \cite{green1986persistence} conditions. 
The concept of online model learning and adaptation has been extended to systems under constraints as well, after obtaining a set or a confidence interval containing possible realizations of the system model. 
Gaussian Mixture Modeling (GMM) \cite{ghahramani1999learning, kalouptsidis2011adaptive} has also been used to identify unknown system parameters, via an expectation maximization algorithm. 

\par
However, parametric models are restricted only to specified forms of function classes, and so to widen the richness of model estimates, non-parametric models are increasingly being utilized, whereby the model structure is also inferred from data. For non-parametric modeling of systems, Gaussian Process (GP) regression \cite{rasmussen2003gaussian}  has been one of the most widely used tools in control theory literature. GP regression keeps track of a Gaussian distribution over infinite dimensional function spaces, in terms of a mean function and a covariance kernel, which are updated with data.  Given any system state, GP regression returns the mean function value at that state, along with a confidence interval. 
Kernel regression methods such as local linear regression \cite{fan1993local, rosolia2019learning} and Nadaraya-Watson estimator \cite{schuster1979contributions} are some other non-parametric methods for system identification and control. Estimates obtained using these methods often come with confidence intervals as detailed in  \cite{armstrong2018simple}, instead of sets containing all possible realizations of the system, which is a critical drawback from the perspective of robust control.

The focus of this paper is to 
propose a simple non-parametric approach for modelling the \textit{unknown} dynamics of a discrete time autonomous system.
The proposed approach applies to unknown nonlinear systems with dynamics described by a state-update function that is  globally Lipschitz over a bounded domain, with known Lipschitz constant. Instead of identifying the state-update function itself, we identify its graph- the set of all state and corresponding state-update function value pairs. This is done by computing envelopes of the state-update function, which are sets that contain the graph  of the state-update function. 
These envelopes are built by using historical data of state trajectories and the Lipschitz property of the function. 

The paper is divided into two parts. 
In the first part we describe a method to compute the envelope set which contains all possible realizations of the unknown state-update function at any given state. 
The authors in \cite{berkenkamp2017safe, koller2018learning} use GP regression modeling to provide probabilistic confidence intervals on the state-update function at any given state. The key difference is that we approximate a function via a subset of the Euclidean space rather than approximating it directly in a function space. 
In the second part, we provide two applications of the proposed approach, namely $(i)$ obtaining tractable set based outer approximations of the unknown state-update function and $(ii)$ computing positive invariant sets \cite{blanchini1999set, kolmanovsky1998theory} for the unknown system using the s-procedure \cite{polik2007survey}. 
\section{Notation} \label{sec:notations}
$\Vert\cdot\Vert$ denotes the Euclidean norm in $\mathbb{R}^n$ unless explicitly stated otherwise. An open ball in $\mathbb{R}^n$, of radius $r$ and centered at $x$ is denoted as $\mathcal{B}_r(x)$. Notation $O(\cdot)$ is used to describe an expression that decays to 0 as fast as its argument. The Minkowski sum of two sets $A$ and $B$ is given by 
$$A\oplus B=\{a+b\ | a\in A, b\in B\}.$$We use $\mathrm{ell}(c,R)$ to denote an ellipse that is centered at point $c$ and has a shape matrix $R=R^\top \succ 0$.

\section{Problem Formulation}\label{sec:probF}
Consider the discrete time  autonomous, time invariant system 
\begin{align}\label{eq:sys}
    x_{k+1} = f(x_k),
\end{align}
where the state-update function $f(\cdot): \mathcal{X}\rightarrow\mathcal{X}$ describes the system dynamics and is defined over the state space $\mathcal{X} \subseteq \mathbb{R}^n$. 
\begin{assumption}\label{ass:fX_as}
The function $f(\cdot): \mathcal{X}\rightarrow\mathcal{X}$ is continuous and differentiable over a convex and closed domain $\mathcal{X} \subset \mathbb{R}^n$ with $\Vert \nabla f(x)\Vert\leq L$, for all $x\in\mathcal{X}$ and some $L>0$.
\end{assumption}
\begin{proposition}[\cite{rudin1964principles}]\label{prop:lip}
Let Assumption \ref{ass:fX_as} hold. Then $\Vert f(x) - f(y) \Vert \leq L \Vert x-y \Vert$ for all $x,y \in \mathcal{X}$, i.e., $f(\cdot)$ is $L$-Lipschitz in the domain $\mathcal{X}$.
\end{proposition}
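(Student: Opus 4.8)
The plan is to exploit the convexity of $\mathcal{X}$ together with the bound on the Jacobian via a one-dimensional integration along the segment joining the two points. Fix arbitrary $x,y \in \mathcal{X}$. Since $\mathcal{X}$ is convex, the segment $\gamma(t) = y + t(x-y)$, $t \in [0,1]$, lies entirely in $\mathcal{X}$, so that $f$ and $\nabla f$ are well-defined along it. I would then introduce the auxiliary curve $g(t) := f(\gamma(t))$ and differentiate by the chain rule to obtain $g'(t) = \nabla f(\gamma(t))\,(x-y)$, where $\nabla f$ denotes the Jacobian of $f$.

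Next, I would apply the fundamental theorem of calculus (componentwise) to write $f(x) - f(y) = g(1) - g(0) = \int_0^1 \nabla f(\gamma(t))\,(x-y)\, dt$. Taking norms and passing the norm inside the integral via the triangle inequality for integrals, then bounding the integrand using submultiplicativity of the induced operator norm together with Assumption \ref{ass:fX_as}, yields $\Vert f(x)-f(y)\Vert \le \int_0^1 \Vert\nabla f(\gamma(t))\Vert\,\Vert x-y\Vert\, dt \le L\Vert x-y\Vert$, which is exactly the claimed inequality. Since $x,y$ were arbitrary, this establishes the $L$-Lipschitz property on $\mathcal{X}$.

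The main subtlety to watch is that, for vector-valued $f$, the exact mean value theorem fails: in general there is no single point $c$ on the segment with $f(x)-f(y) = \nabla f(c)(x-y)$. This is precisely why the argument must route through the integral form of the fundamental theorem (equivalently, the mean value inequality) rather than a pointwise equality, and it is the one step where a naive scalar-calculus intuition can go wrong. I would also be careful that the bound $\Vert\nabla f(x)\Vert \le L$ is read as the operator norm of the Jacobian induced by the Euclidean norm, so that the estimate $\Vert\nabla f(\gamma(t))(x-y)\Vert \le \Vert\nabla f(\gamma(t))\Vert\,\Vert x-y\Vert$ is legitimate; under that interpretation the remaining bounds are entirely routine, which is consistent with citing \cite{rudin1964principles} for the result.
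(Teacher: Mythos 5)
The paper never proves this proposition itself --- it is stated with a citation to \cite{rudin1964principles} and no in-text or appendix proof --- so your write-up supplies the argument the citation stands for. Your route is the standard one for $f\in C^1$: parametrize the segment (legitimate by convexity of $\mathcal{X}$), write $f(x)-f(y)=\int_0^1 \nabla f(\gamma(t))(x-y)\,dt$, and bound via the operator norm. Rudin's own argument (Theorem~9.19, resting on Theorem~5.19) is different and slightly more economical: with $g(t)=f(\gamma(t))$ and $z=g(1)-g(0)$, apply the \emph{scalar} mean value theorem to $\varphi(t)=\langle z, g(t)\rangle$ to get $\Vert z\Vert^2 = \varphi(1)-\varphi(0) = \langle z, g'(t^*)\rangle \le \Vert z\Vert\,\Vert \nabla f(\gamma(t^*))\Vert\,\Vert x-y\Vert$ for some $t^*$, and divide by $\Vert z\Vert$. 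No integration is needed.

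The distinction is not merely stylistic under the paper's exact hypotheses. Assumption~1 asserts only that $f$ is differentiable with $\Vert \nabla f(x)\Vert \le L$; it does \emph{not} assume $\nabla f$ is continuous. For a merely differentiable $f$, the integrand $t\mapsto \nabla f(\gamma(t))(x-y)$ is bounded but can fail to be Riemann integrable (Volterra-type derivatives), so your invocation of the fundamental theorem of calculus is not automatic --- "the integral form of FTC" and "the mean value inequality" are not interchangeable at this level of generality, even though you present them as such. The gap is easily closed: either switch to Rudin's projection trick above, which needs only pointwise differentiability, or note that each component of $g$ has derivative bounded by $L\Vert x-y\Vert$, hence is Lipschitz by the scalar MVT and therefore absolutely continuous, so the Lebesgue version of the FTC applies and your integral identity is restored. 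Everything else in your proposal --- the explicit warning that the exact vector-valued MVT fails, and the reading of $\Vert\nabla f\Vert\le L$ as the induced operator norm so that $\Vert\nabla f(\gamma(t))(x-y)\Vert\le \Vert\nabla f(\gamma(t))\Vert\,\Vert x-y\Vert$ --- is correct and well judged.
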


Now suppose that the function $f(\cdot)$ is unknown. The objective of this work is to compute a set containing
$f(x)$ for any state $x$ in the state space $\mathcal{X}$ using trajectory data $\{x_0, x_1, x_2,\dots\}$ and the Lipschitz property of the unknown function $f(\cdot)$. 
\begin{assumption}\label{ass:lip}
The Lipschitz constant $L$ is known.
\end{assumption}
In case the Lipschitz constant $L$ is unknown, it can be estimated using methods such as \cite{chakrabarty2019data}. Integrating such estimation methods into the proposed work is a subject of future research.

\begin{remark}\label{rem:mod-unmodsplit}
The problem of characterizing  $L-$Lipschitz un-modelled dynamics $d(\cdot)$ in $$x_{k+1}=\underbrace{\bar{f}(x_k)}_{\textnormal{assumed model}}+\underbrace{d(x_k)}_{\textnormal{un-modelled dynamics}}$$ can also be cast into a problem of the form \eqref{eq:sys}.
In this case, we use the trajectory data $\{x_0,x_1,x_2,\dots\}$ to construct  $\{x_1-\bar{f}(x_0),x_2-\bar{f}(x_1),\dots\}$ which is then used for computing a set containing $d(x)$ at $x\in\mathcal{X}$.
\end{remark}

\section{Proposed Approach}\label{sec:methodology}
We will make use of the following definitions.
\begin{definition}[Graph]\label{def:graph}
The graph of function $f(\cdot): \mathcal{X}\rightarrow \mathcal{X}$ is the set 
\begin{align}\label{graphdef}
    G(f)=\{(x,f(x))\in\mathbb{R}^n\times\mathbb{R}^n\vert\ \forall x\in\mathcal{X}\}.
\end{align}
\end{definition}

\begin{definition}[Envelope]\label{def:envelope}
An envelope of function $f(\cdot):\mathcal{X}\rightarrow\mathcal{X}$ is a set $\mathbf{E}^f\subseteq \mathbb{R}^n\times\mathbb{R}^n$, with the property
\begin{align}
    G(f)\subseteq \mathbf{E}^f.
\end{align}
\end{definition}


We use trajectory data $\{x_0,x_1,x_2,\dots\}$ of the system dynamics \eqref{eq:sys} to construct an envelope of the system dynamics $f(\cdot)$. Observe that the trajectory data can be used to construct tuples $(x_k,f(x_k))=(x_k,x_{k+1})$.
In particular, at every time instant $N$, we have access to measurements $(x_{k},f(x_{k}))$, for all $k=0,1,\dots, N-1$. These measurements are utilized to construct envelopes recursively. Our approach for envelope construction is summarised as follows:
\begin{enumerate}
    \item At time $N$, compute an envelope $\mathcal{E}(x_{N-1})$ using the tuple $(x_{N-1},f(x_{N-1}))$ and the $L-$Lipschitz property of $f(\cdot)$.
    
    \item  Compute a refined envelope $\mathbf{E}^f_{N}$ by intersecting the envelope $\mathbf{E}^f_{N-1}$ from  time $N-1$ with the envelope $\mathcal{E}(x_{N-1})$ computed in step 1), i.e.,
    $$\mathbf{E}^f_N=\mathbf{E}^f_{N-1}\cap\mathcal{E}(x_{N-1}).$$
\end{enumerate}
For 1), the envelope is obtained as the sublevel set of a quadratic function. Afterwards, 2) is obtained by using the set membership approach \cite{tanaskovic2014adaptive, bujarbaruahAdapFIR, bujarArxivAdap, bertsekas1971recursive}. Finally, we use the computed envelope for obtaining a set containing the value of $f(x)$ at any $x\in\mathcal{X}$, using the notion of a \emph{slice} of an envelope defined below.
\begin{definition}[Envelope Slice]\label{def:env_slice}
The slice of an envelope $\mathbf{E}^f\subseteq\mathbb{R}^n\times\mathbb{R}^n$ at a given $\bar{x}\in\mathcal{X}$ is the set defined as
\begin{align}
\mathbf{E}^f\Big\vert_{x=\bar{x}}=\{(x,y)\in\mathbf{E}^f\subseteq\mathbb{R}^n\times\mathbb{R}^n\vert x=\bar{x}\}.
\end{align}

\end{definition}

Fig.~\ref{fig:graph} shows a typical realization of the proposed approach along with the associated set definitions which are detailed next.

\begin{figure}[h]
    \centering
    \includegraphics[width=\columnwidth]{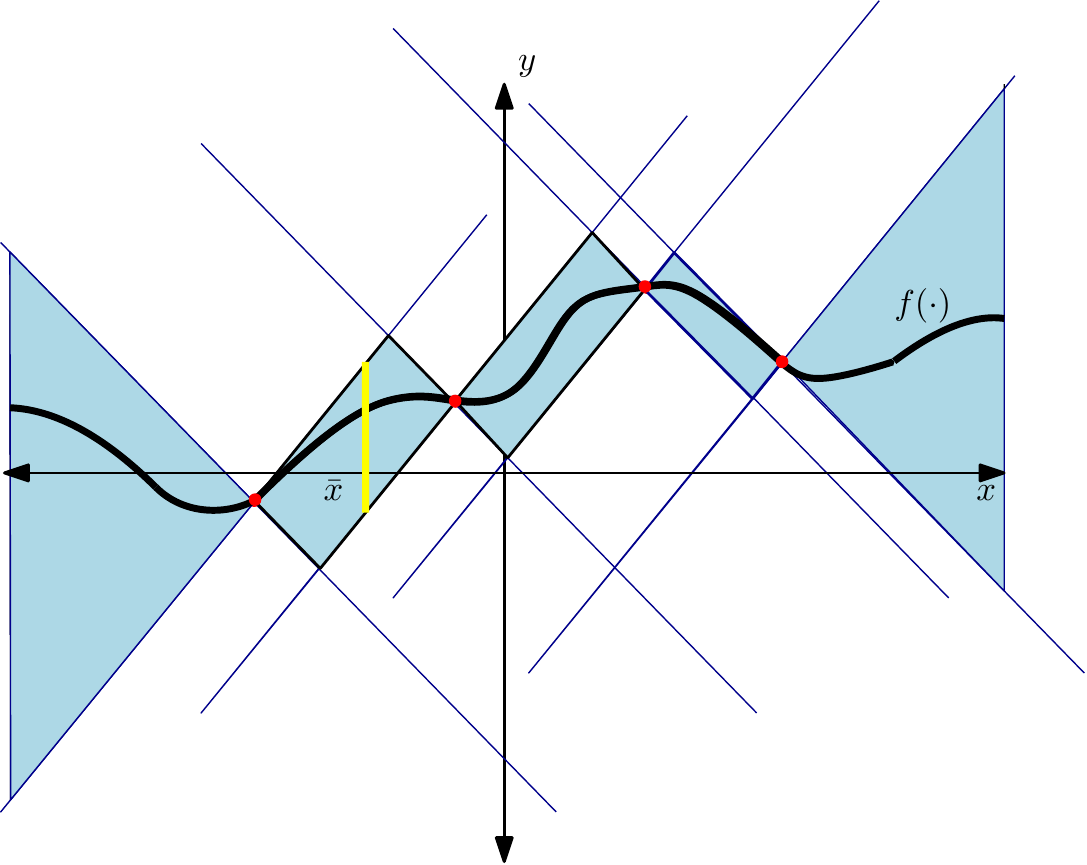}
    \caption{Construction of an envelope for a one dimensional system to approximate the graph $G(f)$ (black curve) of its state-update function $f(\cdot)$. Tuples $(x,f(x))$ (red points) obtained from trajectory data are used for constructing the envelope (blue set) and its slice (yellow set) at $x=\bar{x}$.}
    \label{fig:graph}
\end{figure}
\subsection{Envelope Construction}
Inspired by \cite{fazlyab2019safety,megretski1997system}, we use quadratic constraints (QCs) as our main tool to approximate the graph of a function. A definition appropriate for our purposes is presented below. 
\begin{definition}[QC Satisfaction] A set $\mathbb{X}\subset \mathbb{R}^{n}$  satisfies the quadratic constraint specified by a symmetric matrix $Q$ if
\begin{equation}
    \begin{bmatrix}x\\ 1\end{bmatrix}^\top Q\begin{bmatrix}x\\ 1\end{bmatrix}\leq 0,\quad \forall x\in\mathbb{X}.
\end{equation}
\end{definition}
\vspace{2mm}

The following proposition uses a QC to characterize a coarse approximation of the graph of an $L-$Lipschitz function.
\begin{proposition}\label{prop:qc}
The graph $G(f)$ of an $L-$Lipschitz function $f(\cdot)$ satisfies the QC specified by the matrix
\small
\begin{align}\label{qc}
Q_L^f(x_k)=\begin{bmatrix}
-L^2\mathbb{I}_n& \mathbf{0}_{n\times n} & L^2x_k\\
\mathbf{0}_{n\times n} & \mathbb{I}_n & -f(x_k)\\
L^2 x_k^\top & -f^\top(x_k) & -L^2  x_k^\top x_k\\
& & \hspace{10mm} +f^\top(x_k) f(x_k)
\end{bmatrix},
\end{align}
\normalsize
for any $(x_k, f(x_k))\in G(f)$.\\
\end{proposition}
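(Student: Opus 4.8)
The plan is to verify the QC-satisfaction inequality directly, by substituting an arbitrary point of the graph into the quadratic form and showing that the resulting scalar reduces, after cancellation, to a nonpositive quantity. Since every point of $G(f)$ has the form $(x,f(x))$, the augmented vector to be plugged into the QC is $v=[x^\top\ f^\top(x)\ 1]^\top\in\mathbb{R}^{2n+1}$, and I would compute $v^\top Q_L^f(x_k)\,v$ explicitly.

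The substantive step is the block expansion of $v^\top Q_L^f(x_k)\,v$. Carrying out the multiplication block by block, the top-left $-L^2\mathbb{I}_n$ block contributes $-L^2x^\top x$, the middle $\mathbb{I}_n$ block contributes $f^\top(x)f(x)$, the off-diagonal blocks involving $x_k$ and $f(x_k)$ contribute the cross terms $2L^2x_k^\top x$ and $-2f^\top(x_k)f(x)$, and the bottom-right scalar contributes $-L^2x_k^\top x_k+f^\top(x_k)f(x_k)$. The key bookkeeping observation — the part requiring the most care — is that these six groups of terms regroup exactly as two completed squares, so that
$$v^\top Q_L^f(x_k)\,v=\Vert f(x)-f(x_k)\Vert^2-L^2\Vert x-x_k\Vert^2.$$

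With the quadratic form in this shape, the conclusion is immediate: by Proposition \ref{prop:lip}, $f(\cdot)$ is $L$-Lipschitz on $\mathcal{X}$, hence $\Vert f(x)-f(x_k)\Vert\leq L\Vert x-x_k\Vert$, i.e. $\Vert f(x)-f(x_k)\Vert^2-L^2\Vert x-x_k\Vert^2\leq 0$. Therefore $v^\top Q_L^f(x_k)\,v\leq 0$ for every $(x,f(x))\in G(f)$, which is precisely the statement that $G(f)$ satisfies the QC specified by $Q_L^f(x_k)$. I do not anticipate a genuine obstacle here: the only real work is the sign- and index-tracking in the expansion, and recognizing the completion-of-squares structure that collapses the cross terms into the two Euclidean norms above.
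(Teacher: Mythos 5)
Your proof is correct and takes essentially the same route as the paper: both rest on the algebraic identity $v^\top Q_L^f(x_k)\,v=\Vert f(x)-f(x_k)\Vert^2-L^2\Vert x-x_k\Vert^2$ for $v=[x^\top\ f^\top(x)\ 1]^\top$, combined with the Lipschitz bound of Proposition~\ref{prop:lip}. The only difference is direction of presentation — you expand the quadratic form and then invoke the Lipschitz inequality, while the paper starts from the Lipschitz inequality and rewrites it as the QC through a chain of equivalences — but the underlying computation is identical.
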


\begin{proof}
Using the definition of the $L-$Lipschitz property of $f(\cdot)$ (from Proposition \ref{prop:lip}), at any $(x_k,f(x_k)) \in G(f)$, we have
\small
\begin{align*}
    &\Vert f(x)-f(x_k)\Vert \leq L \Vert x-x_k \Vert \quad ~~~\forall (x,f(x))\in G(f),\\
    \iff & (f(x)-f(x_k))^\top(f(x)-f(x_k))\leq L^2(x-x_k)^\top(x-x_k)\\
    &~~~~~~~~~~~~~~~~~~~~~~~~~~~~~~~~~~~~~~~~~~\forall (x,f(x))\in G(f),\\
    \iff &\begin{bmatrix}x\\f(x)\\1 \end{bmatrix}^\top Q_L^f(x_k)\begin{bmatrix}x\\f(x)\\1 \end{bmatrix}\leq 0,\quad\forall (x,f(x))\in G(f).
\end{align*}
\normalsize
Therefore $G(f)$ satisfies the QC specified by $Q_L^f(x_k)$.
\end{proof}
The following corollary then gives us the definition of the envelope $\mathcal{E}(x_k)$.
\begin{corollary}\label{corr:env}
The set defined by \begin{align}\label{eq:gapprox}
\mathcal{E}(x_k)=\{(x,y)\in\mathbb{R}^{n}\times\mathbb{R}^{n}\vert \begin{bmatrix}x\\y\\1 \end{bmatrix}^\top Q_L^f(x_k)\begin{bmatrix}x\\y\\1 \end{bmatrix}\leq 0\}\end{align}
is an envelope for all $L-$Lipschitz functions that pass through $(x_k, f(x_k))$.
\end{corollary}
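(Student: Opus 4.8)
The plan is to unfold the definition of an envelope (Definition \ref{def:envelope}) and reduce the statement to a direct application of Proposition \ref{prop:qc}. By Definition \ref{def:envelope}, showing that $\mathcal{E}(x_k)$ is an envelope for a given $L$-Lipschitz function $g(\cdot)$ satisfying $g(x_k)=f(x_k)$ amounts to verifying the inclusion $G(g)\subseteq\mathcal{E}(x_k)$, i.e., that every tuple $(x,g(x))$ in the graph of $g$ satisfies the defining QC of $\mathcal{E}(x_k)$ in \eqref{eq:gapprox}.

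The key observation I would emphasize is that the matrix $Q_L^f(x_k)$ in \eqref{qc} depends on the function only through the single evaluated pair $(x_k,f(x_k))$, and not on the values of $f$ elsewhere in $\mathcal{X}$. Consequently, for any other $L$-Lipschitz function $g$ with $g(x_k)=f(x_k)$, the matrix built from the datum $(x_k,g(x_k))$ coincides exactly with $Q_L^f(x_k)$. Therefore I would simply invoke Proposition \ref{prop:qc} applied to $g$: since $g$ is $L$-Lipschitz and $(x_k,g(x_k))\in G(g)$, its graph satisfies the QC specified by the matrix constructed from $(x_k,g(x_k))$, which is precisely $Q_L^f(x_k)$.

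Combining these facts, for every $x\in\mathcal{X}$ the tuple $(x,g(x))$ satisfies the quadratic inequality defining $\mathcal{E}(x_k)$, so $(x,g(x))\in\mathcal{E}(x_k)$ and hence $G(g)\subseteq\mathcal{E}(x_k)$. Since $g$ was an arbitrary $L$-Lipschitz function passing through $(x_k,f(x_k))$, this establishes that $\mathcal{E}(x_k)$ is an envelope for all such functions, as claimed.

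I do not anticipate a genuine obstacle here, since the result is essentially a repackaging of Proposition \ref{prop:qc}. The only point requiring care, and the one I would state explicitly rather than gloss over, is the matrix-invariance argument: one must confirm that $Q_L^f(x_k)$ is determined solely by the pair $(x_k,f(x_k))$, so that sharing this single point is enough to force any competing Lipschitz function's graph into the same QC, regardless of how the two functions differ away from $x_k$.
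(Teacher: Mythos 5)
Your proof is correct and takes essentially the same route as the paper's: the paper verifies directly that the Lipschitz inequality for $g$ anchored at $(x_k,f(x_k))$ is equivalent to the QC defined by $Q_L^f(x_k)$, while you obtain the same inequality by invoking Proposition~\ref{prop:qc} for $g$ together with the observation that the matrix depends only on the shared datum $(x_k,f(x_k))$. The matrix-invariance point you make explicit is precisely the implicit content of the paper's substitution $g(x_k)=f(x_k)$, so the two arguments coincide in substance.
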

\begin{proof}
Let $g(\cdot)$ be any $L-$ Lipschitz function such that $g(x_k)=f(x_k)$. From the definition of Lipschitz property we have 
\small
\begin{align*}
    &\Vert g(x)-f(x_k)\Vert \leq L \Vert x-x_k \Vert, \quad \forall (x,g(x))\in G(g),\\
    \iff &\begin{bmatrix}x\\g(x)\\1 \end{bmatrix}^\top Q_L^f(x_k)\begin{bmatrix}x\\g(x)\\1 \end{bmatrix}\leq 0,\quad\forall (x,g(x))\in G(g),\\
    \iff & (x,g(x))\in\mathcal{E}(x_k),\quad\forall (x,g(x))\in G(g),\\
    \iff & G(g)\subseteq \mathcal{E}(x_k).
    \normalsize
\end{align*}
\end{proof}

\begin{remark} The proposed formulation can also be extended to accommodate bounded noise in the measurements of $x_k$ in \eqref{eq:sys}. Suppose that the measurement model is given by $$z_k=x_k+w_k,$$ where $w_k$ belongs to a compact set $\mathcal{W}$. Then the envelope that is guaranteed to contain $G(f)$ is given by $\mathcal{E}(z_k)\oplus (\mathcal{W}\times\mathcal{W})$ where $Q_L^f(\cdot)$ is now constructed using $(z_k, z_{k+1})$.
\end{remark}

\subsection{Successive Graph Approximation}
At time $N$, the envelope $\mathcal{E}(x_{N-1}))$ constructed in \eqref{eq:gapprox} using the tuple $(x_{N-1},f(x_{N-1}))$ can now be used to recursively compute a new envelope $\mathbf{E}^f_N$ by refining the envelope $\mathbf{E}^f_{N-1}$ from time $N-1$ via set intersection-
\begin{align}\label{eq:rer}
\mathbf{E}^f_N=\mathbf{E}^f_{N-1}\cap\mathcal{E}(x_{N-1})
\end{align}
In the following lemma we show that the sets computed in this fashion are indeed envelopes.
\begin{lemma}\label{lem:inter}
For $N\geq1$, given a sequence $\{x_k\}_{k=0}^{N-1}$ obtained under the dynamics (\ref{eq:sys}), we have 
\begin{align}\label{eq:intenv}
    G(f)\subseteq\mathbf{E}^f_N=\mathbf{E}^f_{N-1}\cap\mathcal{E}(x_{N-1})=\bigcap_{k=0}^{N-1} \mathcal{E}(x_k).
\end{align}
\end{lemma}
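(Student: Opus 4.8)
The plan is to establish all three assertions in \eqref{eq:intenv} simultaneously by induction on $N$. Note that the middle equality $\mathbf{E}^f_N=\mathbf{E}^f_{N-1}\cap\mathcal{E}(x_{N-1})$ is nothing but the recursive definition \eqref{eq:rer}, so the actual content to be proved is the closed-form expression $\mathbf{E}^f_N=\bigcap_{k=0}^{N-1}\mathcal{E}(x_k)$ together with the containment $G(f)\subseteq\mathbf{E}^f_N$.

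First I would isolate the single fact that drives the entire argument: by Corollary \ref{corr:env}, applied to the $L$-Lipschitz function $f$ itself (which trivially passes through each measured tuple $(x_k,f(x_k))$), we have $G(f)\subseteq\mathcal{E}(x_k)$ for every $k=0,1,\dots,N-1$. Hence $G(f)$ is contained in each individual envelope, and therefore in any intersection of them.

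For the base case $N=1$, I would take the initial (data-free) envelope to be $\mathbf{E}^f_0=\mathbb{R}^n\times\mathbb{R}^n$, so that $\mathbf{E}^f_1=\mathbf{E}^f_0\cap\mathcal{E}(x_0)=\mathcal{E}(x_0)=\bigcap_{k=0}^{0}\mathcal{E}(x_k)$, with $G(f)\subseteq\mathcal{E}(x_0)$ following from the fact just noted. For the inductive step, assuming $\mathbf{E}^f_{N-1}=\bigcap_{k=0}^{N-2}\mathcal{E}(x_k)$ and $G(f)\subseteq\mathbf{E}^f_{N-1}$, I would substitute this closed form into \eqref{eq:rer} to obtain $\mathbf{E}^f_N=\left(\bigcap_{k=0}^{N-2}\mathcal{E}(x_k)\right)\cap\mathcal{E}(x_{N-1})=\bigcap_{k=0}^{N-1}\mathcal{E}(x_k)$, and then combine $G(f)\subseteq\mathbf{E}^f_{N-1}$ with $G(f)\subseteq\mathcal{E}(x_{N-1})$ to conclude $G(f)\subseteq\mathbf{E}^f_{N-1}\cap\mathcal{E}(x_{N-1})=\mathbf{E}^f_N$.

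I do not expect a substantive obstacle here: the statement is an elementary consequence of Corollary \ref{corr:env} and the monotonicity of set intersection. The only points requiring care are the initialization --- making precise what $\mathbf{E}^f_0$ denotes so that the closed form is valid at $N=1$ --- and the observation that the argument relies on $f$ being \emph{globally} $L$-Lipschitz over $\mathcal{X}$, since this is exactly what guarantees $G(f)\subseteq\mathcal{E}(x_k)$ as a whole rather than only in a neighborhood of each $x_k$.
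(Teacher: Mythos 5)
Your proposal is correct and follows essentially the same route as the paper: the paper also derives $G(f)\subseteq\mathcal{E}(x_k)$ for each $k$ from the Lipschitz inequality via Corollary~\ref{corr:env} (doing so pointwise for an arbitrary $(x,f(x))\in G(f)$ rather than quoting the corollary at the set level), and then obtains the equalities by unrolling the recursion \eqref{eq:rer}, which your explicit induction merely formalizes. Your closing remarks on the initialization $\mathbf{E}^f_0=\mathbb{R}^n\times\mathbb{R}^n$ and on global Lipschitzness match the paper's own aside that globality is what keeps the intersections nonempty.
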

\begin{proof}
See Appendix.
\end{proof}
The recursion is initialized with the trivial envelope $\mathbf{E}^f_0=\mathbb{R}^n\times\mathbb{R}^n$. The procedure is described in Algorithm~\ref{alg:RER}. 

\begin{algorithm}[h]
\DontPrintSemicolon

\KwInit{ $\mathbf{E}_0^f=\mathbb{R}^{n}\times\mathbb{R}^n$}

\KwInput{ At time $N$, $(x_{N-1}, f(x_{N-1}))$ and $\mathbf{E}^f_{N-1}$ }
\KwOutput{Approximation of $G(f)$ at time $N$: $\mathbf{E}^f_N$}

Compute $Q_L^f(x_{N-1})$ (from \eqref{qc}) \;
Compute $\mathcal{E}(x_{N-1})$ using $Q_L^f(x_{N-1})$ (from \eqref{eq:gapprox}) \;
Set $\mathbf{E}^f_{N}=\mathbf{E}^f_{N-1}\cap\mathcal{E}(x_{N-1})$
 \caption{Recursive Envelope Refinement}
 \label{alg:RER}
\end{algorithm}

Note that since the envelope at any time $N$ is computed by intersecting with the envelope at time $N-1$, they are getting successively refined, i.e.,
\begin{align}\label{eq:envelorder}
\mathbf{E}^f_N\subseteq \mathbf{E}^f_{N-1}\subseteq\mathbf{E}^f_{N-2}\dots\subseteq\mathbf{E}^f_0
\end{align}
Now we provide a condition under which the shrinking sets generated by recursion \eqref{eq:rer} stop shrinking i.e., recursion \eqref{eq:rer} attains a fixed point. Intuitively, we would expect this to happen when the incoming tuples $(x,f(x))$ constructed from trajectory data have already been seen previously. The following definition formalises the notion of such trajectories.
\begin{definition}[Periodic Orbit \cite{zhou2003periodic}]\label{def:porb} A $p$-periodic orbit of the discrete dynamical system \eqref{eq:sys} is the set of states obtained under dynamics $x_{k+1}=f(x_k)$ with the property that $x_k=x_{k+p}$ for some finite $p\geq1$ and for all $k\geq0$, i.e.,
\begin{align}\label{eq:porb}
    \mathcal{O}_p(x_0)&=\{x\in\mathcal{X}\ \vert\ x_{k+1}=f(x_k), x_k=x_{k+p}, \nonumber\\
    &~~~~~~~~~~~~~~~~~~~~~~~~~~~~x=x_k,\forall k\geq0\}. 
\end{align}

\end{definition}
Note that the set $\mathcal{O}_p(\bar{x}_\textnormal{eq})=\{\bar{x}_\textnormal{eq}\}$ for all $p\geq 1$ where $\bar{x}_\textnormal{eq}$ is the fixed point $\bar{x}_\textnormal{eq}=f(\bar{x}_\textnormal{eq})$ of system \eqref{eq:sys}. Associated to each fixed point, one can define the set of states that converge to it as follows. 
\begin{definition}[Domain of Attraction\ \cite{ortega1973stability}]\label{def:doa}
The domain of attraction of fixed point $\bar{x}_\textnormal{eq}$ is defined as the set $$\mathbf{D}(\bar{x}_\textnormal{eq})=\{x\in\mathcal{X}\vert x_{k+1}=f(x_k),\lim_{k\rightarrow\infty}x_k=\bar{x}_\textnormal{eq},x=x_0 \}.$$
\end{definition}
The following proposition uses Definition~\ref{def:porb} and Definition~\ref{def:doa} to identify sufficient conditions on system trajectories for termination of the recursion \eqref{eq:rer}. 
\begin{proposition}\label{lem:rerfp}
Given a system trajectory denoted by the set $\{x_0,x_1,x_2,\dots\}$, the recursion \eqref{eq:rer} has a fixed point if either of the following conditions hold: 
\begin{enumerate}
    \item $\mathcal{O}_p(x_k)\subseteq\{x_0,x_1,x_2,\dots\} $ for some finite $p\geq1$ and some $k\geq 0$.
    \item $x_0\in\mathbf{D}(\bar{x}_\textnormal{eq})$ for some fixed point $\bar{x}_\textnormal{eq}$.
\end{enumerate}
\end{proposition}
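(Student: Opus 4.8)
The plan is to exploit the monotone nested structure already recorded in \eqref{eq:envelorder}: the sequence $\{\mathbf{E}^f_N\}$ is non-increasing, so it always admits the set-theoretic limit $\mathbf{E}^f_\infty=\bigcap_{k=0}^\infty\mathcal{E}(x_k)$, which by Lemma~\ref{lem:inter} still contains $G(f)$ and is therefore a nontrivial envelope. I would call a set $\mathbf{E}^\star$ a fixed point of \eqref{eq:rer} if it is invariant under further refinement, i.e.\ $\mathbf{E}^\star=\mathbf{E}^\star\cap\mathcal{E}(x_N)$, equivalently $\mathbf{E}^\star\subseteq\mathcal{E}(x_N)$, for all sufficiently large $N$. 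The first observation is that $\mathcal{E}(x_k)$ depends on the data only through the tuple $(x_k,f(x_k))$, so equal states produce identical envelopes; the whole argument thus reduces to controlling how many genuinely new constraints the incoming tuples contribute.

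For condition 1) I would argue finite-time termination. Since $f$ is deterministic, $\mathcal{O}_p(x_k)\subseteq\{x_0,x_1,\dots\}$ forces $x_{k+p}=x_k$ and hence $x_{j+p}=x_j$ for all $j\ge k$; the trajectory is eventually $p$-periodic and visits only the finitely many distinct states $\{x_0,\dots,x_{k+p-1}\}$. Consequently only finitely many distinct envelopes $\mathcal{E}(x_j)$ arise. Letting $N^\star$ be the first time by which every distinct state has appeared, for $N\ge N^\star$ the envelope $\mathcal{E}(x_N)$ equals some earlier $\mathcal{E}(x_j)$ that is already a factor of $\mathbf{E}^f_{N^\star}\subseteq\mathbf{E}^f_N$, so $\mathbf{E}^f_{N+1}=\mathbf{E}^f_N\cap\mathcal{E}(x_N)=\mathbf{E}^f_N$. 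The recursion is then constant from $N^\star$ onward, and $\mathbf{E}^f_{N^\star}$ is the fixed point.

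For condition 2) the states are in general all distinct, so finite-time termination need not hold; instead I would pass to the limit. Writing the defining inequality of $\mathcal{E}(x_k)$ as $\Vert y-f(x_k)\Vert\le L\Vert x-x_k\Vert$ and using $f(x_k)=x_{k+1}$, I would fix any $(x,y)\in\mathbf{E}^f_\infty$, so that this inequality holds for every $k$, and then let $k\to\infty$. Since $x_0\in\mathbf{D}(\bar{x}_\textnormal{eq})$ gives $x_k\to\bar{x}_\textnormal{eq}$ and, by continuity of $f$ (Assumption~\ref{ass:fX_as}), $x_{k+1}=f(x_k)\to f(\bar{x}_\textnormal{eq})=\bar{x}_\textnormal{eq}$, continuity of the norm yields $\Vert y-\bar{x}_\textnormal{eq}\Vert\le L\Vert x-\bar{x}_\textnormal{eq}\Vert$, i.e.\ $(x,y)\in\mathcal{E}(\bar{x}_\textnormal{eq})$. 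Hence $\mathbf{E}^f_\infty\subseteq\mathcal{E}(\bar{x}_\textnormal{eq})$, and since $\mathbf{E}^f_\infty\subseteq\mathcal{E}(x_N)$ holds automatically for every $N$, the limit set is invariant under any further refinement, including by the limiting envelope $\mathcal{E}(\bar{x}_\textnormal{eq})$; the recursion therefore converges to the fixed point $\mathbf{E}^f_\infty$.

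The main obstacle is condition 2): unlike the periodic case there is no finite step at which the sequence becomes constant, so ``attaining a fixed point'' must be understood asymptotically, and the crux is justifying the interchange of the limit $k\to\infty$ with the envelope membership. The delicate points are \emph{(i)} confirming that $\mathbf{E}^f_\infty=\bigcap_k\mathcal{E}(x_k)$ is the correct object to call the fixed point, and \emph{(ii)} that the limiting constraint $\mathcal{E}(\bar{x}_\textnormal{eq})$ is automatically absorbed, which is precisely what the continuity-plus-limit argument secures and where Assumption~\ref{ass:fX_as} is essential.
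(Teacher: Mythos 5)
Your proof is correct, and its overall architecture matches the paper's: finite-time termination via repetition of envelopes for condition 1), and passage to the monotone limit set for condition 2). For condition 1) the arguments are essentially identical --- the paper fixes the entry time $k^\star$ into the $p$-periodic orbit and shows $\mathbf{E}^f_{N+1}=\mathbf{E}^f_{k^\star+p}$ for all $N\geq k^\star+p$, which is your $N^\star$ argument in different notation (your explicit remark that determinism of $f$ forces $x_{j+p}=x_j$ for all $j\geq k$ is a useful justification the paper leaves implicit). For condition 2) you genuinely diverge in how the key limit is handled: the paper takes the set-valued limit $\mathbf{E}^f_\star=\lim_{N\rightarrow\infty}\mathbf{E}^f_N$ and exchanges this limit with the envelope map, writing $\lim_N \mathcal{E}(x_{N-1})=\mathcal{E}(\lim_N x_{N-1})$ by appeal to the ``property of product of convergent sequences'' --- a step stated without specifying a topology on sets in which the map $x\mapsto\mathcal{E}(x)$ is continuous --- whereas you verify membership pointwise: any $(x,y)\in\mathbf{E}^f_\infty$ satisfies $\Vert y-f(x_k)\Vert\leq L\Vert x-x_k\Vert$ for every $k$, and letting $k\rightarrow\infty$ with continuity of the norm yields $(x,y)\in\mathcal{E}(\bar{x}_\textnormal{eq})$, hence $\mathbf{E}^f_\infty\subseteq\mathcal{E}(\bar{x}_\textnormal{eq})$ and $\mathbf{E}^f_\infty=\mathbf{E}^f_\infty\cap\mathcal{E}(\bar{x}_\textnormal{eq})$. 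Your pointwise route is more elementary and in fact more rigorous than the paper's set-limit interchange, reaching the same fixed-point identity without any unstated continuity of the envelope map; the paper's version is shorter but papers over exactly the ``delicate point'' you flag. One small correction: continuity of $f$ (Assumption~\ref{ass:fX_as}) is not actually needed where you invoke it, since $f(x_k)=x_{k+1}$ is simply the tail of the convergent sequence $\{x_k\}$, so $f(x_k)\rightarrow\bar{x}_\textnormal{eq}$ follows directly from $x_0\in\mathbf{D}(\bar{x}_\textnormal{eq})$ together with the proposition's hypothesis that $\bar{x}_\textnormal{eq}$ is a fixed point; continuity would only be needed if you wanted to \emph{derive} $f(\bar{x}_\textnormal{eq})=\bar{x}_\textnormal{eq}$ rather than assume it.
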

\begin{proof}
See Appendix.
\end{proof}
Next we present how the envelope slice is derived from the constructed envelopes for obtaining a set-valued estimate of $f(x)$ at any $x\in\mathcal{X}$

\subsection{Envelope Slice Computation}
\label{ssec:nom_function}
The set of possible values of function $f(\bar{x})$ at any $\bar{x}\in\mathcal{X}$ can be obtained using \eqref{eq:gapprox} from the function values $f(x_k)$ collected at $k = \{0,1,2,\dots,N-1\}$.  From \emph{only} the $k$-th measurement, we can obtain an estimate of the set of possible values of $f(\bar{x})$, by constructing the slice of envelope $\mathcal{E}(x_k)$ at $x=\bar{x}$, from Definition~\ref{def:env_slice}. We denote this slice with the set $\mathcal{S}(x_k, \bar{x})$ as
\begin{align}\label{eq:sampl_range}
    \mathcal{S}(x_k, \bar{x}) & = \mathcal{E}(x_k) \Big \vert_{x = \bar{x}},\nonumber\\
    & =\{y\in\mathcal{X}\vert  \begin{bmatrix} \bar{x} \\ y \\ 1 \end{bmatrix}^\top Q^f_L(x_k) \begin{bmatrix} \bar{x} \\ y \\ 1 \end{bmatrix}  \leq 0\}\nonumber\\
 &=\{y\in\mathcal{X}\vert \begin{bmatrix}
    y \\ 1 \end{bmatrix}^\top \bar{A}(k,\bar{x}) \begin{bmatrix}
    y \\ 1 \end{bmatrix} \leq 0\},
\end{align}
where we have denoted $\bar{A}(k,\bar{x}) =  M Q^f_L(x_k)M^\top -L^2\begin{bmatrix}0&0\\0&(\bar{x}-2x_k)^\top \bar{x} \end{bmatrix}$, for any $k = \{0,1,2,\dots,N-1 \}$, with matrix $M = \begin{bmatrix}0&1&0\\0&0&1\end{bmatrix}$. Corollary~\ref{corr:env} then implies $f(x) \in \mathcal{S}(x_k, x)$ at any $x\in\mathcal{X}$. 
\begin{proposition}\label{prop:ball_range}
At any $\bar{x}\in\mathcal{X}$, $\mathcal{S}(x_k, \bar{x})$ is a closed norm ball of radius $L \Vert \bar{x} - x_k \Vert$, centered at $f(x_k)$ for each $k=\{0,1,\dots,N-1\}$.
\end{proposition}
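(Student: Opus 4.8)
The plan is to reduce the defining quadratic inequality of the slice to an explicit Euclidean-ball inequality by directly expanding the quadratic form associated with $Q_L^f(x_k)$. Since the slice fixes the first argument at $x=\bar{x}$ and leaves $y$ as the free variable, I would substitute $\bar{x}$ into the first block of the augmented vector $[\bar{x};\,y;\,1]^\top$ appearing in \eqref{eq:sampl_range}, treat $y$ as the unknown, and show that the resulting scalar equals $\Vert y - f(x_k)\Vert^2 - L^2\Vert \bar{x} - x_k\Vert^2$.

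Concretely, I would first compute the matrix-vector product $Q_L^f(x_k)\,[\bar{x};\,y;\,1]^\top$ block by block, using the block structure in \eqref{qc}. The first block returns $L^2(x_k-\bar{x})$, the second returns $y-f(x_k)$, and the third (scalar) block returns $L^2 x_k^\top\bar{x} - f(x_k)^\top y - L^2 x_k^\top x_k + f(x_k)^\top f(x_k)$. Taking the inner product of this with $[\bar{x};\,y;\,1]$ and collecting terms, the cross terms combine (using the scalar identities $y^\top f(x_k)=f(x_k)^\top y$ and $\bar{x}^\top x_k = x_k^\top \bar{x}$), so that the $y$-dependent part completes the square to $\Vert y-f(x_k)\Vert^2$ while the remaining constant part is exactly $-L^2\Vert \bar{x}-x_k\Vert^2$. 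Equivalently, one can bypass the explicit expansion and reuse the algebra from the proof of Proposition~\ref{prop:qc}: that proof shows the quadratic form built from $Q_L^f(x_k)$ and a vector $[x;\,v;\,1]$ equals $\Vert v-f(x_k)\Vert^2-L^2\Vert x-x_k\Vert^2$, and setting $x=\bar{x}$, $v=y$ yields the claim immediately.

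With this identity in hand, the defining inequality of $\mathcal{S}(x_k,\bar{x})$ becomes $\Vert y-f(x_k)\Vert^2\le L^2\Vert \bar{x}-x_k\Vert^2$, i.e. $\Vert y-f(x_k)\Vert\le L\Vert \bar{x}-x_k\Vert$, which is precisely the closed ball of radius $L\Vert \bar{x}-x_k\Vert$ centered at $f(x_k)$. I expect no serious obstacle here: the only care needed is the routine bookkeeping of the cross terms and symmetry when expanding the quadratic form (the alternative route through $\bar{A}(k,\bar{x})$ requires the same algebra). The one subtlety worth flagging is the domain restriction $y\in\mathcal{X}$ appearing in the definition \eqref{eq:sampl_range}; strictly speaking the set is the intersection of this ball with $\mathcal{X}$, so the statement should be read either as identifying the ball induced by the quadratic constraint or under the understanding that the relevant ball lies in $\mathcal{X}$.
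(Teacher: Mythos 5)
Your proposal is correct and follows essentially the same route as the paper, which simply expands the quadratic form in \eqref{eq:sampl_range} to obtain $\Vert y - f(x_k)\Vert^2 \leq L^2 \Vert \bar{x} - x_k \Vert^2$ and reads off the ball. Your extra observation about the $y\in\mathcal{X}$ restriction in \eqref{eq:sampl_range} is a fair pedantic point the paper glosses over, but it does not change the argument.
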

\begin{proof}
Expanding out \eqref{eq:sampl_range} gives us
\begin{align}\label{eq:fball}
    \Vert f(\bar{x}) - f(x_k) \Vert^2 \leq L^2 \Vert \bar{x} - x_k \Vert^2,
\end{align}
for each $k=\{0,1,\dots,N-1\}$ and thus proves the claim.
\end{proof}
As we successively collect data points $(x_k,f(x_k))$ for $k=\{0,1,2,\dots, N-1 \}$ under dynamics \eqref{eq:sys}, the set of possible values of $f(\bar{x})$ at any $\bar{x}\in\mathcal{X}$ is refined as
\begin{align}\label{eq:func_dom}
    \mathcal{F}_N(\bar{x}) = \bigcap_{k=0}^{N-1} \mathcal{S}(x_k, \bar{x})=\bigcap_{k=0}^{N-1} \mathcal{E}(x_k) \Big \vert_{x = \bar{x}}=\mathbf{E}_N^f\Big \vert_{x = \bar{x}},
\end{align}
with the guarantee $f(\bar{x}) \in \mathcal{F}_N(\bar{x})$ at any given time $N \geq 1$. Notice that $\mathcal{F}_N(\bar{x})$ is a slice of envelope $\mathbf{E}^f_N$ at $x=\bar{x}$, as per Definition~\ref{def:env_slice}. We further note that $\mathcal{F}_N(\bar{x})$ in \eqref{eq:func_dom} is convex and compact, as it is an intersection of convex and compact sets \eqref{eq:fball}.
\par
So far we have seen that the envelopes generated by Algorithm \ref{alg:RER} are getting successively refined (in \eqref{eq:envelorder}) and possibly stop improving (as noted in proposition~\ref{lem:rerfp}). But given a trajectory that yields a terminating recursion \eqref{eq:rer}, where in the state space $\mathcal{X}$ are the envelope slices ``tight"? We use the notion of the diameter of a compact set (\cite{rudin1964principles}) to quantify ``tightness" or size of the envelope slice. In the following theorem, we show that if a trajectory starts in the domain of attraction $\mathbf{D}(\bar{x}_\textnormal{eq})$ of a fixed point $\bar{x}_\textnormal{eq}$ of \eqref{eq:sys}, then the error in approximation of $G(f)$ by $\mathbf{E}^f_N$ at points arbitrarily close to $\bar{x}_\textnormal{eq}$ (measured by the diameter of the envelope slice $\mathcal{F}_N(x)$ at any $x\in\mathcal{X}$), gets arbitrarily small for large enough $N$.
\begin{theorem}\label{thrm:AsympApprox}
Suppose we are given a system trajectory denoted by the set $\{x_0,x_1,x_2,\dots,x_N\}$ where $x_0\in\mathbf{D}(\bar{x}_\textnormal{eq})$.
Then for states $x$ arbitrarily close to $\bar{x}_\textnormal{eq}$,  the diameter of $\mathcal{F}_N(x)$ is arbitrarily small for large enough $N$, i.e.,
\begin{align*}
  &\forall\epsilon>0, \exists~\bar{N}(\epsilon):\\  
  &\max_{y\in\mathcal{F}_N(x)}\Vert  y - f(x)  \Vert = O(\epsilon), \\
  &\forall  x\in\mathcal{B}_{\epsilon}(\bar{x}_\textnormal{eq}),\ \forall N \geq \bar{N}(\epsilon).
\end{align*}

\end{theorem}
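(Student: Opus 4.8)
The plan is to exploit the fact that $\mathcal{F}_N(x)$ is contained in \emph{any single} one of the balls $\mathcal{S}(x_k,x)$ making up its defining intersection \eqref{eq:func_dom}, so it suffices to exhibit one index $k^\star$ for which the corresponding ball is both small in radius and centered near $\bar{x}_\textnormal{eq}$. First I would invoke $x_0\in\mathbf{D}(\bar{x}_\textnormal{eq})$ (Definition~\ref{def:doa}) to conclude $\lim_{k\to\infty}x_k=\bar{x}_\textnormal{eq}$; hence for the given $\epsilon>0$ there exists an index $K(\epsilon)$ with $\Vert x_k-\bar{x}_\textnormal{eq}\Vert<\epsilon$ for all $k\geq K(\epsilon)$. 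I would then set $\bar{N}(\epsilon)=K(\epsilon)+1$, so that for every $N\geq\bar{N}(\epsilon)$ the index $k^\star:=K(\epsilon)$ satisfies $k^\star\leq N-1$ and thus $\mathcal{S}(x_{k^\star},x)$ appears as one of the factors in the intersection \eqref{eq:func_dom}.

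Next I would use Proposition~\ref{prop:ball_range} to identify $\mathcal{S}(x_{k^\star},x)$ as the closed ball of radius $r=L\Vert x-x_{k^\star}\Vert$ centered at $f(x_{k^\star})$, and bound this radius for $x\in\mathcal{B}_\epsilon(\bar{x}_\textnormal{eq})$ via the triangle inequality:
\begin{align*}
 r=L\Vert x-x_{k^\star}\Vert\leq L\big(\Vert x-\bar{x}_\textnormal{eq}\Vert+\Vert\bar{x}_\textnormal{eq}-x_{k^\star}\Vert\big)<2L\epsilon.
\end{align*}
Finally, since the guarantee $f(x)\in\mathcal{F}_N(x)$ and the inclusion $\mathcal{F}_N(x)\subseteq\mathcal{S}(x_{k^\star},x)$ together place both $f(x)$ and every $y\in\mathcal{F}_N(x)$ inside this same ball of radius $r$, a second triangle inequality gives $\Vert y-f(x)\Vert\leq 2r<4L\epsilon$. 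Maximizing over $y\in\mathcal{F}_N(x)$ yields the claimed $O(\epsilon)$ bound, uniformly for all $N\geq\bar{N}(\epsilon)$ and all $x\in\mathcal{B}_\epsilon(\bar{x}_\textnormal{eq})$.

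The step I expect to require the most care is the logical bookkeeping around $N$: I must ensure that the favorable index $k^\star$ is \emph{available} in the intersection for \emph{all} $N\geq\bar{N}(\epsilon)$, not merely asymptotically. This is handled by the monotonicity \eqref{eq:envelorder}---once $k^\star\leq N-1$, the factor $\mathcal{S}(x_{k^\star},x)$ is present, and intersecting with additional balls only shrinks $\mathcal{F}_N(x)$, so the inclusion $\mathcal{F}_N(x)\subseteq\mathcal{S}(x_{k^\star},x)$ persists. A secondary subtlety is that the $O(\epsilon)$ constant carries the Lipschitz factor $L$; I would emphasize that only the $L$-Lipschitz property (Proposition~\ref{prop:lip}) is used, and in particular that the center $f(x_{k^\star})$ need never be located explicitly, since the two-point containment argument bounds $\Vert y-f(x)\Vert$ through the common radius $r$ alone.
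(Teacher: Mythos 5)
Your proposal is correct and takes essentially the same route as the paper's proof: both use convergence of $\{x_k\}$ to $\bar{x}_\textnormal{eq}$ (from $x_0\in\mathbf{D}(\bar{x}_\textnormal{eq})$) to select a late sample whose ball $\mathcal{S}(x_{k^\star},x)$ from Proposition~\ref{prop:ball_range} has radius at most $2L\epsilon$ by the triangle inequality, then bound $\Vert y-f(x)\Vert\leq 4L\epsilon$ and observe that further intersections only shrink $\mathcal{F}_N(x)$. You skip the paper's two-sample sphere-intersection refinement, which merely tightens the constant and is not needed for the $O(\epsilon)$ conclusion, and your explicit choice $\bar{N}(\epsilon)=K(\epsilon)+1$ handles the index-availability bookkeeping more carefully than the paper does.
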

\begin{proof}
See Appendix.
\end{proof}


\section{Applications} \label{sec:app}
In this section we demonstrate two applications and corresponding computationally tractable algorithms that utilize the proposed approach in the paper.

\subsection{Ellipsoidal Outer Approximation of $\mathcal{F}_N(\bar{x})$}\label{ssec:rob_opt}
In order to design computationally  tractable robust optimization \cite{ben2009robust} algorithms for all realizations of $f(x)$ at any $x\in\mathcal{X}$ and $N \geq 1$, one must have a ``nice" geometric representation of the envelope slice $\mathcal{F}_N(\bar{x}) = \mathbf{E}^f_N\Big \vert_{x = \bar{x}}$, for all $N \geq 1$. We hereby propose an approach to obtain an ellipsoidal outer approximation to $\mathcal{F}_N(\bar{x})$ for any $N \geq 1$ using the s-procedure  \cite[Section~11.4]{calafiore2014optimization}, having collected measurements at $k=0,1,\dots, N-1$,

Let us parametrize an \emph{ellipsoidal} outer approximation of $\mathcal{F}_N(x_N)$, which we denote by \textnormal{ell}($c(\bar{x}), R(\bar{x})$) as
\begin{align*}
 &   \textnormal{ell}(c(\bar{x}), R(\bar{x})) = \\ & \{y\in\mathbb{R}^n\vert\ (y-c(\bar{x}))^\top R^{-1}(\bar{x})(y-c(\bar{x})) \leq 1\},
\end{align*}
where vector $c(\bar{x})$ and matrix $R(\bar{x})$ are the decision variables, and are functions of $\bar{x}$. We seek the smallest ellipsoidal set such that
\begin{align*}
   \mathcal{F}_N(\bar{x}) =  \bigcap_{k=0}^{N-1}\mathcal{S}_k(x_k, \bar{x})\subseteq  \textnormal{ell}(c(\bar{x}), R(\bar{x})).
\end{align*}
 From s-procedure \cite{Boyd:2004:CO:993483} we know that the above holds true, if there exists scalars $\{ \tau_0, \tau_1, \dots, \tau_{N-1} \} \geq 0$ such that
\begin{align}\label{eq:s_proc}
&\begin{bmatrix} R^{-1}(\bar{x}) & -R^{-1}(\bar{x}) c(\bar{x}) \\ -c^\top(\bar{x})R^{-1}(\bar{x}) & c^\top(\bar{x})R^{-1}(\bar{x})c(\bar{x})\\
& \hspace{30mm} -1\end{bmatrix}\nonumber\\ &~~~~~~~~~~~~~~~~~~~~~~~~~~~~~~~~~~~~~-\sum \limits_{k = 0}^{N-1} \tau_k \bar{A}^s(k,{\bar{x}}) \preceq 0.
\end{align}
We reformulate the above feasibility problem \eqref{eq:s_proc} as a semi-definite program (SDP) in the appendix.

\subsection{Positive Invariant Set Computation}\label{ssec:app2}
\begin{figure}[h]
    \centering
    \includegraphics[width=0.7\columnwidth]{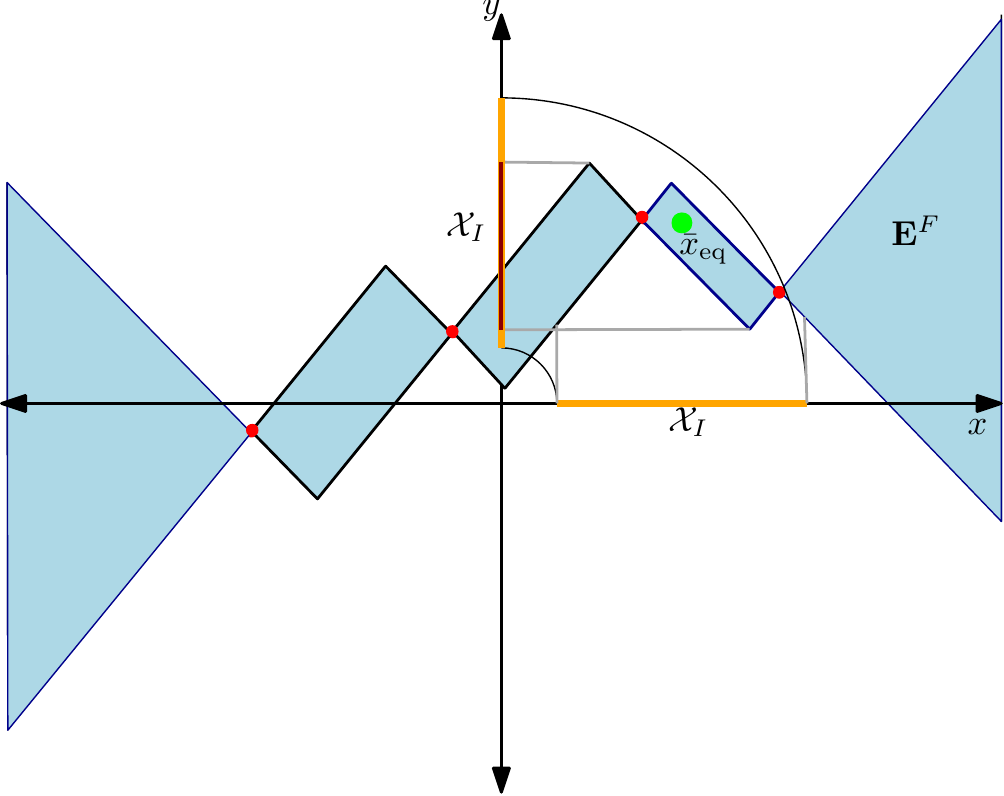}
    \caption{Illustration of invariance using an envelope of system dynamics. $\mathcal{X}_I$(orange) is invariant for all $f(\cdot)$ with $G(f)\subset\mathbf{E}^f$(blue). }
    \label{fig:pos_inv}
\end{figure}
\begin{definition}
[Positive Invariant Set] A set  $\mathcal{X}_I \subseteq \mathcal{X}$ is said to be positive invariant for the system with dynamics \eqref{eq:sys} if
\begin{align*}
 x\in\mathcal{X}_I\Rightarrow f(x) \in \mathcal{X}_I,
\end{align*}
i.e. the set $\mathcal{X}_I$ maps to itself, under the dynamics map $f(\cdot):\mathcal{X}\rightarrow\mathcal{X}$.
\end{definition}

Let there be an equilibrium point $\bar{x}_\textnormal{eq}$ defined as $f(\bar{x}_\textnormal{eq}) = \bar{x}_\textnormal{eq}$. We wish to characterize a positive invariant set $\mathcal{X}_I \subseteq \mathcal{X}$ containing this equilibrium. For the sake of computational tractability, we represent this set as an intersection of $n_I$ ellipsoids, centered at $\bar{x}_\textnormal{eq}$. 
We parameterize the invariant set for some $P_j\succ 0$ with $j=1,2,\dots n_I$ as follows
\begin{align}\label{eq:invar_par}
    \mathcal{X}_I=\{x \in\mathbb{R}^n\vert&\ \begin{bmatrix}x\\1\end{bmatrix}^\top\begin{bmatrix}P_j&-P_j\bar{x}_\textnormal{eq}\\-\bar{x}^{\top}_\textnormal{eq}P_j & \bar{x}^\top_\textnormal{eq}P_j\bar{x}_\textnormal{eq}-1\end{bmatrix}\begin{bmatrix}x\\1\end{bmatrix}\leq 0, \nonumber\ \\
    &~~~~~~~~~~~~~~~~~~~~~~~~~~~~~\forall j=1,2,\dots n_I\}. 
\end{align}
Observing that the tuple $(x, f(x))\in G(f)\subseteq \mathbf{E}^f_N $ consists of a point $x\in\mathcal{X}$ and its image $f(x)\in\mathcal{X}$ under the map $f(\cdot)$, we can use the collected $Q_L^f(x_k)$'s for $k=\{0,1,\dots,N-1\}$ at any time $N$, to obtain a sufficiency condition for \eqref{eq:invar_par} as detailed in the following proposition. 

\begin{proposition}\label{prop:invsetLMI}
Suppose that we are given an approximation of $G(f)$ at time $N$, i.e., $\mathbf{E}^f_N=\bigcap_{k=0}^{N-1} \mathcal{E}(x_k)$, constructed by Algorithm~\ref{alg:RER}. If there exists $\tau_k\geq0$, for all $k=0,1,\dots N-1$ and $P_j\succ 0$, $\rho_{jm}>0$ for all $j,m=1,2,\dots, n_I$, such that the following Bilinear Matrix Inequality (BMI) is feasible
\begin{align}\label{eq:s_proc_invar}
\sum_{j=1}^{n_I}&\begin{bmatrix} -\rho_{jm}P_j& 0 & \rho_{jm}P_j\bar{x}_\textnormal{eq}\\0& P_m & -P_m\bar{x}_\textnormal{eq}\\\bar{x}^\top_\textnormal{eq}\rho_{jm}P_j&-\bar{x}^\top_\textnormal{eq}P_m&-\bar{x}^\top_\textnormal{eq}(\rho_{jm}P_j-P_m)\bar{x}_\textnormal{eq}\\
& & \hspace{10mm} +(\rho_{jm}-1) \end{bmatrix}\nonumber\\ & ~~~~~~~~~~~~~~~~~~~~~~~~~~~~~~~~- \sum \limits_{k=0}^{N-1} \tau_k Q_L^f(x_k) \preceq 0,
\end{align}
for all $m=1,2,\dots, n_I$, then the set $\mathcal{X}_I$ is a positive invariant set for the system with dynamics \eqref{eq:sys}.
\end{proposition}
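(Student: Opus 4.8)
The plan is to recognize \eqref{eq:s_proc_invar} as the matrix form of an S-procedure certificate and to verify that this certificate implies the set inclusion defining positive invariance. First I would rewrite the defining inequality of $\mathcal{X}_I$ in \eqref{eq:invar_par}: the quadratic form in the $j$-th constraint equals $(x-\bar{x}_\textnormal{eq})^\top P_j(x-\bar{x}_\textnormal{eq})-1$, so $\mathcal{X}_I=\bigcap_{j=1}^{n_I}\textnormal{ell}(\bar{x}_\textnormal{eq},P_j^{-1})$ is an intersection of $n_I$ ellipsoids centered at $\bar{x}_\textnormal{eq}$. Invariance, $x\in\mathcal{X}_I\Rightarrow f(x)\in\mathcal{X}_I$, then amounts to showing, for each $m=1,\dots,n_I$ separately, that membership of $x$ in all $n_I$ ellipsoids forces $f(x)$ into the $m$-th ellipsoid.

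Next I would lift everything into the variable $\xi=[x^\top\ f(x)^\top\ 1]^\top$ already used to define the QCs. Using the selection matrices $S_x=\begin{bmatrix}\mathbb{I}_n&\mathbf{0}&\mathbf{0}\\\mathbf{0}&\mathbf{0}&1\end{bmatrix}$ and $S_f=\begin{bmatrix}\mathbf{0}&\mathbb{I}_n&\mathbf{0}\\\mathbf{0}&\mathbf{0}&1\end{bmatrix}$, so that $S_x\xi=[x^\top\ 1]^\top$ and $S_f\xi=[f(x)^\top\ 1]^\top$, the hypotheses and goal all become homogeneous QCs in $\xi$: the constraint $x\in\textnormal{ell}_j$ reads $\xi^\top G_j\xi\le0$ with $G_j=S_x^\top\begin{bmatrix}P_j&-P_j\bar{x}_\textnormal{eq}\\-\bar{x}_\textnormal{eq}^\top P_j&\bar{x}_\textnormal{eq}^\top P_j\bar{x}_\textnormal{eq}-1\end{bmatrix}S_x$; the target $f(x)\in\textnormal{ell}_m$ reads $\xi^\top H_m\xi\le0$ with $H_m=S_f^\top\begin{bmatrix}P_m&-P_m\bar{x}_\textnormal{eq}\\-\bar{x}_\textnormal{eq}^\top P_m&\bar{x}_\textnormal{eq}^\top P_m\bar{x}_\textnormal{eq}-1\end{bmatrix}S_f$; and, crucially, the envelope membership $(x,f(x))\in\mathbf{E}^f_N=\bigcap_k\mathcal{E}(x_k)$ gives $\xi^\top Q_L^f(x_k)\xi\le0$ for every $k$. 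This last family of QCs holds automatically for every $x\in\mathcal{X}$ because $G(f)\subseteq\mathbf{E}^f_N$ by Lemma~\ref{lem:inter} (equivalently Corollary~\ref{corr:env}).

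I would then invoke the S-procedure in its sufficient direction: if there exist multipliers $\rho_{jm}\ge0$ and $\tau_k\ge0$ with
\begin{align*}
H_m-\sum_{j=1}^{n_I}\rho_{jm}G_j-\sum_{k=0}^{N-1}\tau_k Q_L^f(x_k)\preceq0,
\end{align*}
then $\xi^\top H_m\xi\le\sum_j\rho_{jm}\,\xi^\top G_j\xi+\sum_k\tau_k\,\xi^\top Q_L^f(x_k)\xi\le0$ whenever the constraint QCs hold, since all multipliers are nonnegative. The remaining work is purely algebraic: carry out the congruence products $S_x^\top(\cdot)S_x$ and $S_f^\top(\cdot)S_f$ and check block-by-block that $H_m-\sum_j\rho_{jm}G_j$ reproduces exactly the bracketed matrix (summed over $j$) in \eqref{eq:s_proc_invar}, matching the $(1,1)$ block $-\sum_j\rho_{jm}P_j$, the $P_m$ target block, the mixed $(1,3),(2,3)$ entries, and the bottom-right scalar $-\bar{x}_\textnormal{eq}^\top(\rho_{jm}P_j-P_m)\bar{x}_\textnormal{eq}+(\rho_{jm}-1)$. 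Hence feasibility of \eqref{eq:s_proc_invar} for all $m$ certifies all $n_I$ implications, giving $f(x)\in\mathcal{X}_I$ for every $x\in\mathcal{X}_I$, i.e., positive invariance.

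I expect the main obstacle to be the bookkeeping in the congruence step: matching the zero-padding of the ellipsoid matrices into the three-block $\xi$ coordinates so that the cross terms and the constant scalar align precisely with \eqref{eq:s_proc_invar}, in particular tracking where the $P_m$ target block sits relative to the summation over $j$. A conceptual point worth flagging is that, because several quadratic constraints are combined, the S-procedure is only \emph{sufficient} here, so the proposition correctly asserts a one-directional ``if feasible, then invariant'' guarantee rather than an equivalence; nonnegativity of the multipliers $\rho_{jm},\tau_k$ is exactly what validates the weighted-sum inequality.
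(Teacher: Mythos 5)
Your proposal is correct and follows essentially the same route as the paper: evaluate the BMI as a quadratic form in $\xi=[x^\top\ f(x)^\top\ 1]^\top$, use nonnegativity of the multipliers $\rho_{jm},\tau_k$ together with $x\in\mathcal{X}_I$ and $G(f)\subseteq\mathbf{E}^f_N$ (Lemma~\ref{lem:inter}) to discharge the constraint QCs, and conclude $f(x)$ lies in each ellipsoid $m$. One bookkeeping correction to your congruence step: since the $P_m$ target block sits \emph{inside} the sum over $j$, the bracketed matrix summed over $j$ equals $n_I H_m-\sum_{j}\rho_{jm}G_j$, not $H_m-\sum_j\rho_{jm}G_j$; this is harmless because $n_I>0$, so $n_I\,\xi^\top H_m\xi\leq 0$ still yields $\xi^\top H_m\xi\leq 0$, and indeed the paper's proof carries this factor $n_I$ explicitly.
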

\begin{proof}
See Appendix.
\end{proof}
One of many approaches to solving such a BMI (see \cite{andySOS}) is detailed in the 
appendix.

\section{Numerical Example}
In this section we demonstrate the approach proposed in Section \ref{sec:methodology} for characterizing the un-modelled dynamics of a pendulum. We also showcase construction of a positive invariant set for this system, utilizing the tools from Section~\ref{ssec:app2}.

\subsection{Pendulum Model}
The continuous time model of the considered pendulum is given by
\begin{align}\label{eq:pend}
    ml^2 \ddot{\theta} + mgl\sin{\theta} + \tilde{d}(\theta, \dot{\theta}) = \mathcal{T},
\end{align}
where $m$ is the mass, $l$ is the length, $\theta$ is the angle the pendulum makes with the vertical axis, $\tilde{d}(\theta,\dot{\theta})$ is an un-modelled damping force with known Lipschitz constant $L_d$ and $\mathcal{T}$ is a known external torque. In this work, we simulate the system with the damping force $\tilde{d}(\theta, \dot{\theta})=-L_d\dot{\theta}$ and characterize state-dependent bounds for the same. We write the pendulum dynamics \eqref{eq:pend} in state-space form as
\begin{align}\label{eq:pend_ss}
    \begin{bmatrix} \dot{\theta}\\ \ddot{\theta}\end{bmatrix} = \begin{bmatrix} 0 & I \\ 0 & 0 \end{bmatrix} \begin{bmatrix} {\theta} \\  \dot{\theta}\end{bmatrix} + \begin{bmatrix} 0 \\ \frac{\mathcal{T}}{ml^2}-\frac{g}{l}\sin \theta - \frac{\tilde{d}(\theta,\dot{\theta})}{ml^2} \end{bmatrix},
\end{align}
where $x = [\theta~ \dot{\theta}]^\top$ is the state of the pendulum. We consider a torque $\mathcal{T}$ that stabilizes the pendulum's state when it's upright, i.e., when $\bar{x}_{\mathrm{eq}}=[\pi~ 0]^\top$. We discretize system \eqref{eq:pend_ss} and write it in the form of \eqref{eq:sys} as $x_{k+1} = f(x_k)$.
We then simulate the system forward in time with a variational integrator for mechanical systems, as in \cite{nair2019discrete}. The simulation parameters are: $m = 2\mathrm{kg}, l = 2\mathrm{m}$ and $L_d = 0.2\mathrm{N}$.

\subsection{Envelope Construction for Damping Force}\label{sysID_Pend}
The discrete time model $x_{k+1}=f(x_k)$ is decomposed as 
\begin{align}
    x_{k+1}=\underbrace{\bar{f}(x_k)}_{\textnormal{assumed model}}+\underbrace{d(x_k)}_{\textnormal{un-modelled damping}},
\end{align}
where $x_k = [\theta_k~\dot{\theta}_k]^\top$ , $d(\cdot)$ is the unknown damping in discrete time with Lipschitz constant $\tilde{L}_d=\frac{L_dT_S}{ml^2}$ and $T_S=0.005 s$ is the sampling period. Our experiment is succinctly described below:
\begin{itemize}
\item Trajectories up to a specified time instant $N$, starting from four different initial conditions \textcolor{black}{$x_0 = \{[\frac{5 \pi}{6}~ 0]^\top,[\frac{5 \pi}{3}~  -0.5]^\top,[\frac{\pi}{6}~ 0]^\top,[\frac{5 \pi}{4}~-0.2]^\top\}$} are simulated (solid lines in Fig. \ref{fig:traj_data}) and stored. 
\item Realizations of the un-modelled dynamics $d(x_k)$ are recorded via the measurement model $d(x_k)=x_{k+1}-\bar{F}(x_k)$. 
\item Having recorded the measurements $(x_k,d(x_k))$ for $k=0,1,\dots,N-1$ along all four trajectories, we construct the estimate $\mathcal{D}_N(\bar{x})$ (defined as in \eqref{eq:func_dom}) of $d(\bar{x})$  at six different query points ($\star$ and $\circ$ in Fig. \ref{fig:traj_data}) using Algorithm~\ref{alg:RER}.
\end{itemize}
\begin{table}[h]
 	\renewcommand{\arraystretch}{2.7}
	\caption{Uncertainty range (up to three significant digits). Symbol $[\cdot, \cdot]$ denotes an interval}
	\centering
 \begin{tabular}{|c|c|c|} 
 \hline
 Query Point $\bar{x}$ & $\mathcal{D}_{100}(\bar{x}) / 10^{-4}$ & $\mathcal{D}_{4000}(\bar{x}) / 10^{-4}$  \\ [0.5ex]
 \hline\hline
 $[2.12\ -0.45]^\top$ & $[-0.837,0.759]$ &  $[-0.001,0.759]$\\ 
 \hline
 $[3.11\ 0.84]^\top$ & $[-1.22,0.73]$ &  $[-1.04,-1.04]$\\ 
 \hline
  $[1.40\ 0.34]^\top$ & $[-1.58,0.79]$  & $[-0.43,-0.43]$\\
 \hline
  $[3.05\ -0.37]^\top$ & $[-0.708,0.486]$ &  $[0.46,0.46]$\\ 
 \hline
 $[4.21\ 0.38]^\top$ & $[-2.05,0.74]$ &  $[-0.56,0.16]$\\ 
 \hline
 $[5.60\ 0.22]^\top$ & $[-3.73,2.46]$ &  $[-0.28,-0.28]$\\
 \hline
\end{tabular}
	\label{table:uncert_estim}
\end{table}
\textcolor{black}{From Table~\ref{table:uncert_estim}, we observe that the range of un-modelled dynamics $\mathcal{D}_N(\bar{x})$ shrinks at all query points $\bar{x}$, as more data is collected. This is a direct consequence of the fact that as shown in \eqref{eq:func_dom}, $\mathcal{D}_N(\bar{x})$ is obtained with successive intersection operations upon gathering new measurements. 
\begin{figure}[h]
    \includegraphics[width=\columnwidth]{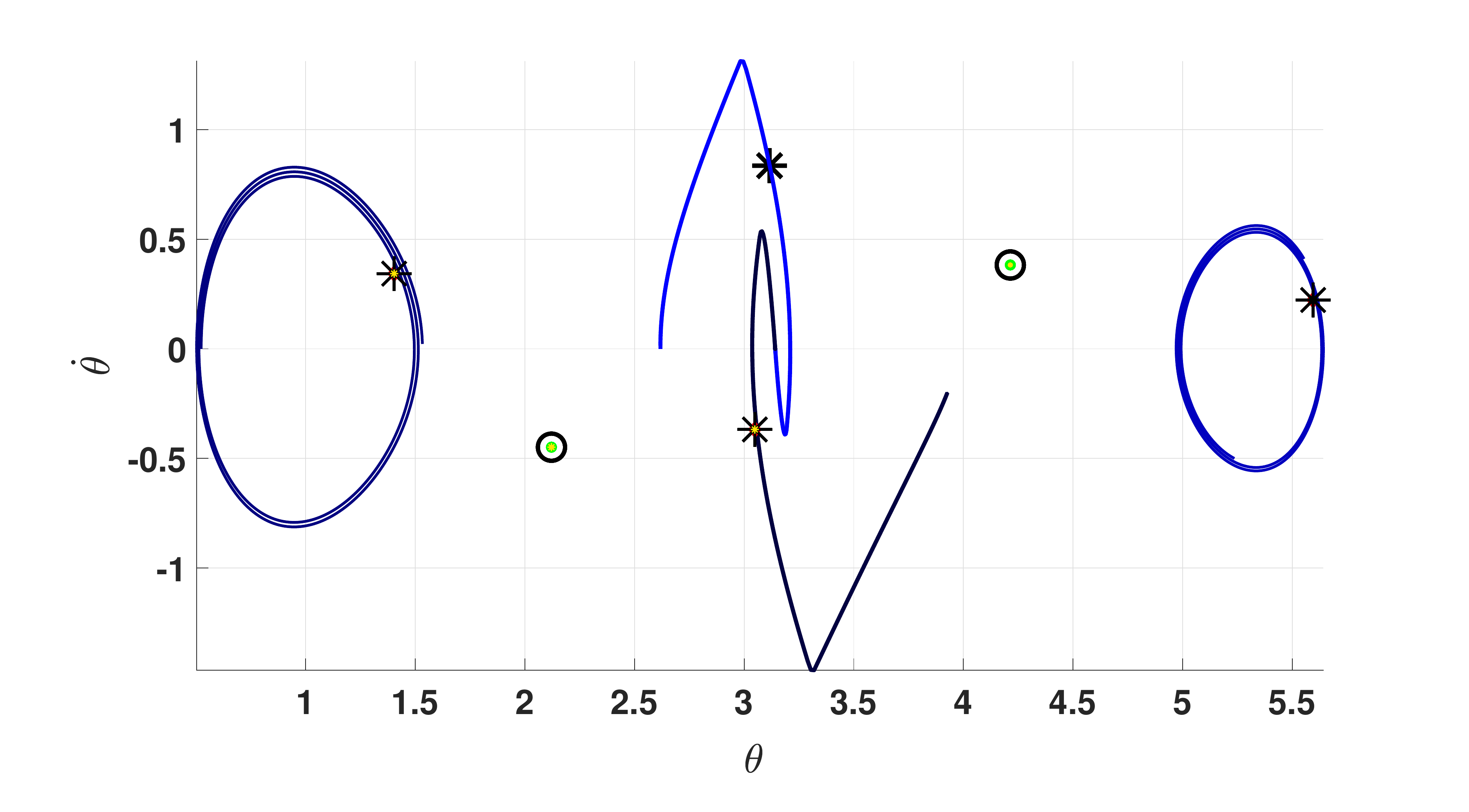}
    \caption{Data collection trajectories (solid lines) and query points ($\star$ and $\circ$) in state-space.}
    \label{fig:traj_data}
\end{figure}
Moreover, 
the learned dynamics are more accurate for query points near the fixed point $\bar{x}_\textnormal{eq} = [\pi,0]^\top$ than for query points far away, as shown in Fig.~\ref{fig:traj_data}. For example, at query points $\bar{x} = \{[1.40\ 0.34]^\top,[3.05\ -0.37]^\top,[3.11\ 0.84]^\top,[5.60\ 0.22]^\top\}$, we see around $100\%$ decrease in the uncertainty range estimate as $N$ increases from $100$ to $4000$. The corresponding percentages for $\bar{x} = \{[4.21\ 0.38]^\top, [2.12\ -0.45]^\top\}$ are just around $73 \%$ and $34 \%$ respectively. }

\subsection{Computation of Positive Invariant Set}
\textcolor{black}{For pendulum dynamics \eqref{eq:pend_ss} in discrete-time, we use the BMI \eqref{eq:s_proc_invar} of Proposition~\ref{prop:invsetLMI} to compute an ellipsoidal positive invariant set. In this specific example, we have used all the $N=4000$ samples from each of the previously collected four trajectories in Section~\ref{sysID_Pend}. The number of intersecting ellipsoids $n_I$ in \eqref{eq:s_proc_invar} is set as $n_I = 2$. Fig.~\ref{fig:invarsetsim} shows the invariant set $\mathcal{X}_I \subseteq \mathcal{X}$, where $\mathcal{X} = [0,2\pi]\times[-2.5,2.5]$. To further check numerically that the set $\mathcal{X}_I$ in Fig.~\ref{fig:invarsetsim} is indeed a positive invariant set, we run simulations from six initial conditions inside the set. As seen in Fig.~\ref{fig:invarsetsim}, all six trajectories stay within $\mathcal{X}_I$.}
\begin{figure}[t]
    \includegraphics[width = \columnwidth]{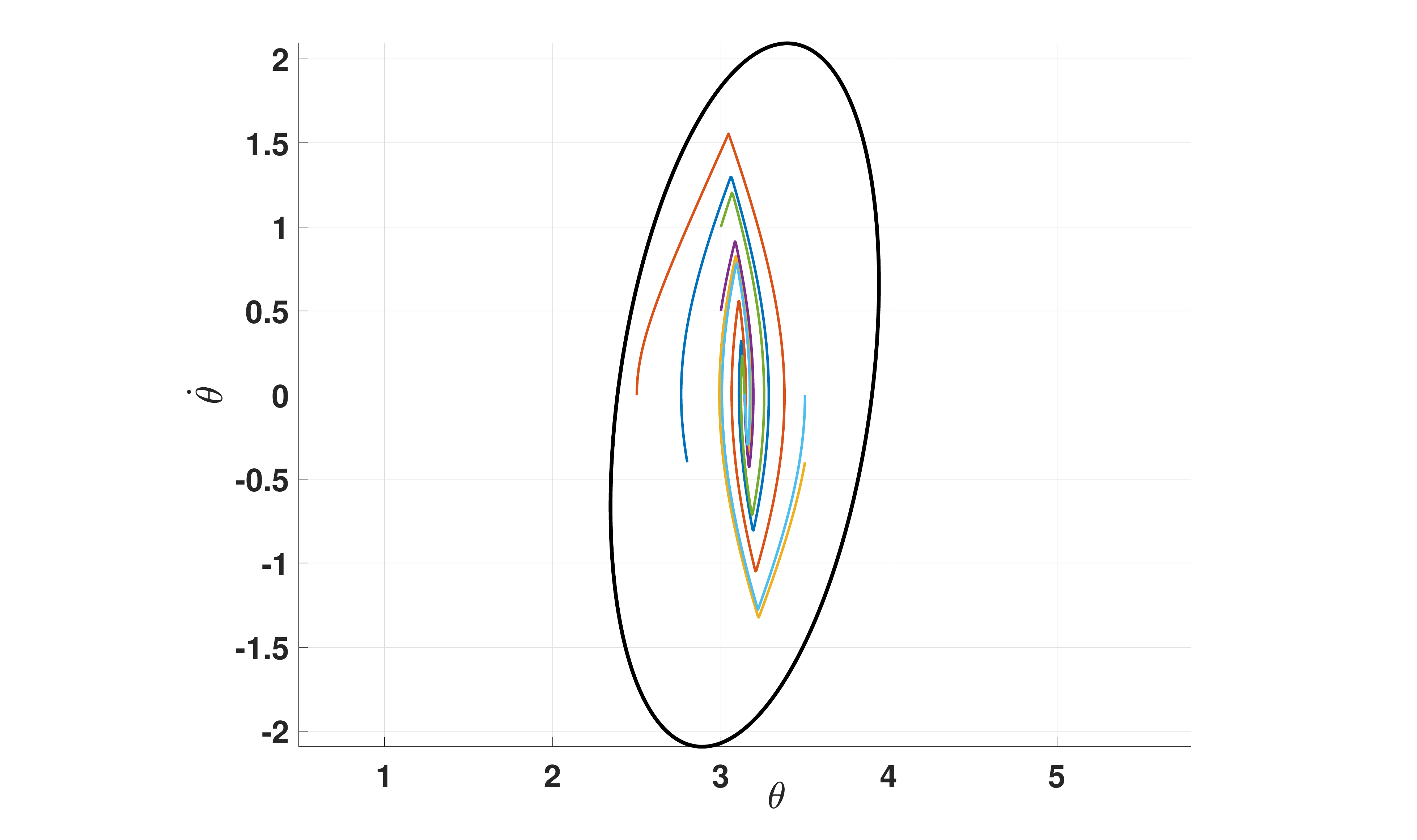}
    \caption{Invariant set (solid black line) computed using \eqref{eq:s_proc_invar}. Trajectories starting inside the set are contained within.}
    \label{fig:invarsetsim}
\end{figure}

\section{Conclusions}
We presented a non-parametric technique for online modeling of systems with nonlinear Lipschitz dynamics. The key idea is to successively use measurements to approximate the \emph{graph} of the function using envelopes described by quadratic constraints. Using techniques from convex optimization, we also computed a set valued estimate of the range of the unknown function at any given point in its domain, and a positive invariant set around a known equilibrium.
We further highlighted the efficacy of the proposed methodology via a detailed numerical example. 

\section*{Acknowledgements}
This work was partially funded by Office of Naval Research grant ONR-N00014-18-1-2833.
\renewcommand{\baselinestretch}{0.81}
\bibliographystyle{IEEEtran}
\bibliography{IEEEabrv,bibliography} 
\appendix
\subsection{Tractable Optimization Problems for Section \ref{sec:app}}
\subsubsection{SDP for Ellipsoidal Outer Approximation of $\mathcal{F}(x_N)$}
Following \cite[Section~11.4]{calafiore2014optimization} finding the the minimum trace ellipsoid $\textnormal{ell}(c(x_N),R(x_N))$ that satisfies \eqref{eq:s_proc} can be posed as an SDP using Schur complement rule as:
\begin{equation*} 
    \begin{array}{llll}
        \displaystyle \min_{\xi} & \textnormal{trace}(R(x_N))\\
        ~ \text{s.t.} 
        & \begin{bmatrix}  p(x_N) & q(x_N) & -\mathbb{I}_n \\ q^\top(x_N) & r(x_N) & c^\top(x_N) \\ -\mathbb{I}_n & c(x_N) & -R(x_N) \end{bmatrix} \preceq 0, \\ 
        & \tau_k \geq 0,~\forall k = 0,1,\dots, N-1,\\
        & R(x_N) \succ 0,
    \end{array}
\end{equation*}
where $\xi = \{R(x_N), c(x_N), \tau_0,\dots, \tau_{N-1}\}$ and we have used the variable nomenclature \begin{align*}
    p(x_N) &= - \sum \limits_{k=0}^{N-1} \tau_k \mathbb{I}_n,~
    q(x_N) =   \sum \limits_{k=0}^{N-1} \tau_k f(x_k),\\
    r(x_N) &= -1 -\sum \limits_{k=0}^{N-1} \tau_k \Big (-L^2(x_N)^\top x_N + 2L^2 (x_N)^\top x_k \\ &~~~~~~~~~~~~~- L^2(x_k)^\top (x_k) + f^\top(x_k) f(x_k)\Big ).
\end{align*}

\subsubsection{Bisection Method for Positive Invariant Set Computation}
Note that \eqref{eq:s_proc_invar} is linear in $P_j$ for a fixed $\rho_{jm}$ and vice-versa. This facilitates using a bisection search on $\rho_{jm}$ until a feasible solution is obtained. For bounded $\mathcal{X}$, feasibility is guaranteed  for some $\rho$ such that all $\rho_{jm}=\rho$ because of continuity of \eqref{eq:s_proc_invar} as well as its feasibility for $\rho=0$.  After iterating over $\rho_{jm}$, \eqref{eq:s_proc_invar} is solved as a Linear Matrix Inequality (LMI)
\begin{equation*} 
    \begin{array}{llll}
        \displaystyle \min_{P_1,\dots, P_{n_I}, \tau_0,\dots, \tau_{N-1}} & \sum_{j=1}^{n_I}\textrm{trace}(P_j)\\
        \ \ ~~~~~~~~~ \text{s.t.} 
        & \eqref{eq:s_proc_invar}, \\ 
        & \tau_k \geq 0,~\forall k = 0,1,\dots, N-1.
    \end{array}
\end{equation*}
\subsection{Proofs}
\subsubsection*{Proof of Lemma \ref{lem:inter}}
For any $(x,f(x))\in G(f)$,  we have from the Lipschitz inequality,
\begin{align*}
    \Vert f(x)-f(y) \Vert\leq L\Vert x-y\Vert,\quad \forall y\in\mathcal{X},
\end{align*}
and choosing $y=x_k$ for $k=0,1,\dots N-1$ in the above inequality, in view of Corollary~\ref{corr:env} yields, 
\begin{align*}
    (x,f(x))&\in\mathcal{E}(x_k), \quad\forall k=0,1,\dots N-1,\\
    \Rightarrow (x,f(x))&\in\bigcap_{k=0}^{N-1} \mathcal{E}(x_k).
\end{align*}
Note the fact that $f$ is globally Lipschitz ensures that the intersections are non-empty.
Since this was shown for any $(x,f(x))\in G(f)$, we can thus conclude that 
$$ G(f)=\bigcup_{x\in\mathcal{X}}(x,f(x))\subseteq\bigcap_{k=0}^{N-1} \mathcal{E}(x_k). $$The other equalities follow from \eqref{eq:rer}.\hfill$\blacksquare$

\renewcommand{\baselinestretch}{0.85}
\subsubsection*{Proof of Proposition \ref{lem:rerfp}} 
We first prove the implication for when condition (1) holds. Let $k^\star$ be the time at which the system enters the $p$-periodic orbit, i.e., $\mathcal{O}_p(x_{k^\star})\subseteq\{x_0,x_1,x_2,\dots\}$. 
From Lemma~\ref{lem:inter} we have 
$\mathbf{E}^f_{N}=\bigcap_{k=0}^{N-1} \mathcal{E}(x_k)$ at any time $N$. For $N\geq k^\star+p$, consider the set
\begin{align*}
    \mathbf{E}^f_{N+1}&=\bigcap_{k=0}^{N} \mathcal{E}(x_k)\\
    &=(\bigcap_{k=0}^{k^\star-1}\mathcal{E}(x_k))\cap(\bigcap_{k=k^\star}^{k^\star+p-1}\mathcal{E}(x_k))\cap(\bigcap_{k=k^\star+p}^{N}\mathcal{E}(x_k)),\\
    &=\mathbf{E}^f_{k^\star+p}\cap(\bigcap_{k=k^\star+p}^{N}\mathcal{E}(x_k)).
\end{align*}
From Definition~\ref{def:porb}, we have that $x_k\in\mathcal{O}_p(x_{k^\star})$ for all $k=k^\star+p,\dots,N$. Using \eqref{eq:intenv} and the fact that $f(\cdot)$ is globally Lipschitz on $\mathcal{X}$, we thus have $\mathbf{E}^f_{k^\star+p}\subseteq \mathcal{E}(x_k)$, for all $k=k^\star+p,\dots,N$. Combining this implication with the definition of $\mathbf{E}^f_{N+1}$ above yields 
\begin{align*}
\mathbf{E}^f_{N+1}=\mathbf{E}^f_{k^\star+p},\quad\forall N\geq k^\star+p,
\end{align*}
and so $\mathbf{E}^f_{k^\star+p}$ is a fixed point for recursion \eqref{eq:rer}.\\
Now we prove the implication for when condition (2) holds, i.e., $\lim_{k\rightarrow\infty}x_k=\bar{x}_\textnormal{eq}$. Since the sets $\mathbf{E}^f_N=\bigcap_{k=0}^{N-1} \mathcal{E}(x_k)$ are non-increasing in the sense of \eqref{eq:envelorder}, the following limit set is well defined
\begin{align*}
\mathbf{E}^f_\star&=\lim_{N\rightarrow\infty}\mathbf{E}^f_N\\
&=\lim_{N\rightarrow\infty}\bigcap_{k=0}^{N-1} \mathcal{E}(x_k),\\
&=\lim_{N\rightarrow\infty}(\bigcap_{k=0}^{N-2} \mathcal{E}(x_k))\cap\mathcal{E}(x_{N-1}),\\
&=\lim_{N\rightarrow\infty}\bigcap_{k=0}^{N-2}\mathcal{E}(x_k)\cap  \mathcal{E}(\lim_{N\rightarrow\infty}x_{N-1}).
\end{align*}
The last equality follows from the property of product of convergent sequences because all the limits on both sides of the equation are well defined. Computing the limits then gives us the following equality
$$\mathbf{E}^f_\star=\mathbf{E}^f_\star\cap\mathcal{E}(\bar{x}_\textnormal{eq}). $$
Thus $\mathbf{E}^f_\star$ is a fixed point for recursion \eqref{eq:rer}.
\hfill$\blacksquare$
\subsubsection*{Proof of Theorem \ref{thrm:AsympApprox}}
From the definition $x_0$ in the theorem, we have that the sequence $\Big\{x_k\Big\}_{k=0}^\infty$ converges to the fixed point $\bar{x}_\textnormal{eq}$ of \eqref{eq:sys}. From the definition of the convergence of a sequence, we have that for every $\epsilon>0$, there exists a $\bar{N}(\epsilon)$, such that 
$$\Vert x_k-\bar{x}_\textnormal{eq} \Vert\leq\epsilon,\quad\forall k\geq \bar{N}(\epsilon).$$
The convergent sequence is a Cauchy sequence satisfying with the same $\epsilon$ and $\bar{N}(\epsilon)$ as above. That is,
\begin{align}\label{eq:appen1}
\Vert x_{k_1}-x_{k_2}\Vert \leq 2\epsilon, \quad \forall k_2,k_1 \geq \bar{N}(\epsilon).
\end{align}
Consider a subsequent queried point $\bar{x}$  at most $\epsilon$ distance away from $\bar{x}_\textnormal{eq}$. We further have from the Lipschitz inequality,
\begin{align}\label{eq:append2}
    \Vert f(x_{k_1})-f(\bar{x})\Vert &\leq L\Vert x_{k_1}-\bar{x} \Vert.
\end{align}
From Proposition~\ref{prop:ball_range}, we know that the possible values of $f(\bar{x})$ lie within a sphere of radius $L\Vert x_{k_1}-\bar{x} \Vert$ centered at $f(x_{k_1})$. The diameter of the above sphere bounds the maximum error in the estimate of $f(\bar{x})$, i.e.,
\begin{align*}
    &\Vert y - f(\bar{x})\Vert\leq 2L\Vert x_{k_1}-\bar{x} \Vert,\quad \forall y\in \mathcal{S}(x_{k_1},\bar{x}).
\end{align*}
For $k_1$ chosen as in \eqref{eq:appen1}, the above inequality can be written as 
\begin{align*}
    &\Vert y - f(\bar{x})\Vert\leq 4L\epsilon,\quad \forall y\in \mathcal{S}(x_{k_1},\bar{x}).
\end{align*}
Now for another $k_2$ chosen as in \eqref{eq:appen1} such that $f(x_{k_1})\neq f(x_{k_2})$, we have
\begin{align}\label{append3}
    \Vert f(x_{k_2})-f(\bar{x})\Vert &\leq L\Vert x_{k_2}-\bar{x}\Vert. 
\end{align}
The intersections of the envelopes constructed from \eqref{eq:append2} and \eqref{append3} is depicted in Fig.~\ref{fig:inter}.
\begin{figure}[h]
    \centering
    \includegraphics[width = 0.637\columnwidth]{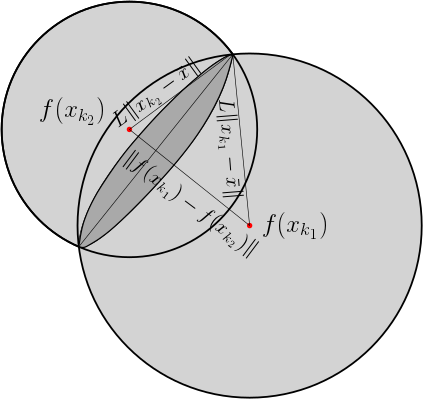}
    \caption{Intersection of set of possible values of $f(\bar{x})$, given two samples $x_{k_1}$, $x_{k_2}$}
    \label{fig:inter}
\end{figure}
We thus obtain a tighter bound on the error in the estimate of $f(\bar{x})$ via the diameter of the $n-2$ dimensional sphere obtained at the intersection of $n-1$ dimensional spheres, as given by
\begin{align*}
    &\Vert y - f(\bar{x})\Vert\leq\\ &2\sqrt{\frac{L^2(\Vert x_{k_1}-\bar{x} \Vert^2+\Vert x_{k_2}-\bar{x} \Vert^2)-\Vert f(x_{k_2})-f(x_{k_1})\Vert^2}{2}},\\
    &\leq\textrm{min}(2L\Vert x_{k_2}-\bar{x}\Vert, 2L\Vert x_{k_1}-\bar{x}\Vert)\leq 4L\epsilon,\\
    &\hfill\forall y\in \mathcal{S}(x_{k_1},\bar{x})\cap \mathcal{S}(x_{k_2},\bar{x})\\
    \Rightarrow &\max_{y\in\mathcal{S}(x_{k_1},\bar{x})\cap \mathcal{S}(x_{k_2},\bar{x})}\Vert y-f(\bar{x})\Vert\leq\\ &2\sqrt{\frac{L^2(\Vert x_{k_1}-\bar{x} \Vert^2+\Vert x_{k_2}-\bar{x} \Vert^2)-\Vert f(x_{k_2})-f(x_{k_1})\Vert^2}{2}},\\
    &\leq\textrm{min}(2L\Vert x_{k_2}-\bar{x}\Vert, 2L\Vert x_{k_1}-\bar{x}\Vert)\leq 4L\epsilon
\end{align*}
Taking intersections using all the envelopes collected (which are non-empty due to Lipschitz property of $f(\cdot)$ on $\mathcal{X}$) further shrinks the possible error and hence yields the desired result.

\hfill$\blacksquare$

\subsubsection*{Proof of Proposition \ref{prop:invsetLMI}} 
Consider any vector $[x^\top y^\top 1]^\top \in\mathbb{R}^{2n+1}$ such that $x\in\mathcal{X}_I$ and $[x^\top y^\top]^\top \in G(f)$. Given that BMI \eqref{eq:s_proc_invar} is feasible, we multiply $[x^\top y^\top 1]^\top \in\mathbb{R}^{2n+1}$ on both sides of \eqref{eq:s_proc_invar} for any $m\in\{1,2,\dots, n_I\}$ to get
\small
\begin{align*}
    &\sum_{j=1}^{n_I}\begin{bmatrix} -\rho_{jm}P_j& 0 & \rho_{jm}P_j\bar{x}_\textnormal{eq}\\0& P_m & -P_m\bar{x}_\textnormal{eq}\\\bar{x}^\top_\textnormal{eq}\rho_{jm}P_j&-\bar{x}^\top_\textnormal{eq}P_m&-\bar{x}^\top_\textnormal{eq}(\rho_{jm}P_j-P_m)\bar{x}_\textnormal{eq}+\rho_{jm}-1 \end{bmatrix}\\
    &~~~~~~~~~~~~~~~~~~~~~~~~~~~~~~~~~~~~~~~~~~~~~~~~~~~-\sum \limits_{k=0}^{N-1} \tau_k Q_L^f(x_k) \preceq 0,\\
    & \Rightarrow \sum_{j=1}^{n_I}-\rho_{jm}\begin{bmatrix}x\\1\end{bmatrix}^\top \bar{P}_j \begin{bmatrix}x\\1\end{bmatrix} + n_I\begin{bmatrix} y \\ 1 \end{bmatrix}^\top \bar{P}_m \begin{bmatrix}y\\1\end{bmatrix} \\ &~~~~~~~~~~~~~~~~~~~~~~~~~~~~~~~~~~~~-\begin{bmatrix}x \\ y \\ 1 \end{bmatrix}^\top \sum \limits_{k=0}^{N-1} \tau_k Q_L^f(x_k))\begin{bmatrix}x\\y\\1\end{bmatrix}\leq 0,\\
    & \Rightarrow \begin{bmatrix}y\\1\end{bmatrix}^\top \bar{P}_m \begin{bmatrix}y\\1\end{bmatrix}\leq 0,\ \forall m=1,2\dots, n_I,
\end{align*}
\normalsize
where $\bar{P}_j = \begin{bmatrix}P_j&-P_j\bar{x}_\textnormal{eq}\\-\bar{x}^{\top}_\textnormal{eq}P_j & \bar{x}^\top_\textnormal{eq}P_j\bar{x}_\textnormal{eq}-1\end{bmatrix}$. The last implication follows from the fact $x\in\mathcal{X}_I$ and $G(f)\subseteq \mathbf{E}^f$ (as proved in Lemma~\ref{lem:inter}). The last inequality further implies that $G(f)\ni y\in\mathcal{X}_I$, for all $x\in\mathcal{X}_I$, and hence, that $\mathcal{X}_I$ is invariant.
\hfill$\blacksquare$

\end{document}